\theoremstyle{plain}
\newtheorem{thm}{}[section]
\newtheorem{lemma}[thm]{Lemma}
\newtheorem{proposition}[thm]{Proposition}
\newtheorem{theorem}[thm]{Theorem}
\newtheorem{corollary}[thm]{Corollary}
\theoremstyle{remark}
\theoremstyle{definition}
\newtheorem{problem}{Problem}
\newtheorem{definition}{Definition}
\newcommand{\structure}[1]{\mathbb{#1}}
\newcommand{\family}[1]{\mathcal{#1}}
\newcommand{\forb}[1]{\operatorname{Forb}_h(#1)}
\newcommand{\dom}[1]{\mathrm{dom}(#1)}
\newcommand{\Aut}[0]{\mathrm{Aut}}
\DeclareMathOperator\dist{dist}
\DeclareMathOperator\CSP{CSP}
\DeclareMathOperator\Ext{Ext}
\DeclareMathOperator\Col{Col}
\newcommand\ignore[1]{}
\begin{document}

\title{Edge-coloring problems with forbidden patterns and planted colors}
 
\thanks{Alexey Barsukov is funded by the European Union (ERC, POCOCOP, 101071674). Views and opinions expressed are however those of the author(s) only and do not necessarily reflect those of the European Union or the European Research Council Executive Agency. Neither the European Union nor the granting authority can be held responsible for them.}
 
\author{Alexey Barsukov}
\author{Antoine Mottet}
\author{Davide Perinti}

\address{Faculty of Mathematics and Physics, Charles University, Prague, Czechia}
\email{alexey.barsukov@matfyz.cuni.cz}
 
\address{Hamburg University of Technology, Research Group on Theoretical Computer Science, Germany}
\email{$\{$antoine.mottet,davide.perinti$\}$@tuhh.de}
 
\begin{abstract}
    Edge-coloring problems with forbidden patterns are decision problems asking to find an edge-coloring of the input graph which avoids a homomorphism from a fixed forbidden family of edge-colored graphs.
    In the precolored version of these problems, some of the edges of the input graph are already colored, and the goal is to find an extension of this coloring which omits a homomorphism from a forbidden graph.
    
    The existence of a complexity classification for such problems is an open question of Bienvenu, ten Cate, Lutz, and Wolter (ACM TODS'14) and we answer it for certain forbidden families consisting of odd cycles and cliques.
    The proof consists of two main stages.
    First, we combine the techniques from infinite constraint satisfaction and finite Ramsey theory in order to show that the edge-coloring problem is poly-time equivalent to its precolored version.
    After that, we show that the precolored version is poly-time equivalent to a finite constraint satisfaction problem, which has a P vs.\ NP-complete dichotomy by the seminal results of Bulatov (FOCS'17)  and Zhuk (FOCS'17).
\end{abstract}

\maketitle


\section{Introduction}
\label{sec:intro}

The following is a well-known instance of Ramsey's theorem: it is impossible to color the edges of the complete graph on $R_2(3)=6$ vertices with $c=2$ colors in a way that avoids a monochromatic complete graph $\structure{K}_n$ of size $n=3$.
More generally, Ramsey's theorem states that for every $c$ and $n$, there is a number $R_c(n)$ such that no complete graph with at least $R_c(n)$ vertices can be colored with $c$ colors while avoiding monochromatic copies of $\structure{K}_n$.
The exact values of $R_c(n)$ are unknown even for moderate values of $c$ and $n$, and determining them computationally is hard~\cite{GrahamSpencer1990Ramsey}.
Generalizing cliques to arbitrary finite edge-colored graphs leads to the following computational problem $\Col(\mathcal F)$.
\begin{problem}[$\Col(\mathcal F)$]
    Fix $c\in \mathbb N$ -- number of colors and $\mathcal F\subset \big\{(\mathbb F, f)\ \big|\ \mathbb F \text{ - graph},\ f\colon E^\mathbb F\to [c]\big\}$ -- finite family of edge-colored graphs.\\
    \textbf{Input:} A graph $\mathbb G$.\\
    \textbf{Question:} Is there an edge-coloring $s\colon E^\mathbb G\to[c]$ such that for any $(\mathbb F,f)\in \mathcal F$ and any homomorphism $h\colon \mathbb F\to \mathbb G$, we never have $f = s\circ h$?
\end{problem}

The edge-colored graphs from the family $\mathcal F$ are called \emph{forbidden patterns}.
If an edge-coloring $s$ witnesses the positive answer to the question, then it is called \emph{$\mathcal F$-free}.
Edge-coloring problems appeared in theoretical computer science long ago: e.g., Garey and Johnson's list of NP-complete problems~\cite{GareyJohnson} has a problem \textsc{No-Mono-Tri} which asks for an edge-coloring in 2 colors without monochromatic triangles. 
In our setting, this problem is denoted by $\Col\big(\mathbb K_3^{(1)},\mathbb K_3^{(2)}\big)$, where $\structure{K}_n^{(i)}$ denotes a complete graph on $n$ vertices all of whose edges have color $i\in\{1,2\}$.
Later, Burr~\cite{Burr} proved that $\Col\bigl(\structure{K}_n^{(1)},\structure{K}_n^{(2)}\bigr)$ is also NP-complete for all $n> 3$.
Such Ramsey-type problems have been considered for decades in the variants where the graphs $\structure{K}_n$ are replaced by other graphs, mostly in the case where all obstructions are monochromatic and where every color has a single monochromatic obstruction.

\subsection{The logical and computational relevance of Ramsey-type problems}
Besides their combinatorial interest, problems of the form $\Col(\family F)$ have appeared in connection with results in the area of ontology-mediated query answering~\cite{OBDA}.
Consider the Guarded Fragment of first-order logic (GFO).
In~\cite[Theorems 3.17 and 4.2]{OBDA}, it is established that answering a union of conjunctive queries mediated by an ontology written in GFO can be phrased as an instance of $\Col(\family F)$ for a suitably chosen family of obstructions $\family F$ depending on the query to be answered.
However, those obstructions are not necessarily monochromatic as is the case in the Ramsey setting, and obstructions are forbidden by \emph{homomorphisms}.
Moreover, such problems are exactly those problems that can be described in the fragment of existential second-order logic called \emph{Guarded Monotone SNP} (GMSNP).
This logic was independently introduced by Madelaine~\cite{MadelaineMMSNP2} and Bienvenu, ten Cate, Lutz, and Wolter~\cite{OBDA}.
GMSNP is an extension of Feder and Vardi's logic \emph{Monotone Monadic SNP without inequality} (MMSNP)~\cite{FederVardi}, and it relates to MMSNP in the same way as the logic MSO$_2$ of Courcelle and Engelfriet~\cite{Courcelle} relates to monadic second-order logic. 
Namely, second-order quantifiers of arity greater than $1$ are allowed in the syntax, as long as their range is guarded by the input predicates from the signature.
Then, for every GMSNP sentence $\Phi$, the class of finite graphs satisfying $\Phi$ coincides with the class of yes-instances of the problem $\Col(\family F)$, where $\family F$ is a finite set of edge-colored graphs as above.

From the perspective of descriptive complexity it is natural to ask, for the problems that can be described in GMSNP, what complexity they can have.
An oft-mentioned open problem is whether there is a \emph{complexity dichotomy}, i.e., if every such problem is either solvable in poly-time or NP-complete~\cite{OBDA,MadelaineMMSNP2,MMSNPjournal}.
This happens to be true for MMSNP~\cite{FederVardi}, and the dichotomy has been proved for some other subclasses of GMSNP~\cite{Bitter,Santiago,Feller}.
Conversely, it is known that for variants of such coloring problems, for example, problems asking  for a vertex-coloring omitting certain vertex-colored subgraphs~\cite[Theorem 4.1]{KunNesetril} can encode every problem in NP (modulo poly-time equivalence) and, by Ladner's theorem, have no complexity dichotomy unless P = NP.

Problems of the form $\Col(\family F)$ can be seen as finite unions of \emph{constraint satisfaction problems} (CSPs) with countably infinite templates satisfying certain model-theoretic properties (they are the so-called \emph{reducts of finitely bounded homogeneous structures})~\cite{ASNP}.
A complexity dichotomy for such problems has been conjectured and proved to be true in many particular cases (see, e.g., the introduction of~\cite{SmoothApproximations}).
Thus, the complexity dichotomy for $\Col(\mathcal F)$ would be a consequence of the dichotomy conjectured for the CSPs mentioned here, and $\Col(\mathcal F)$ gives a rich class of examples on which to put the CSP dichotomy conjecture to the test.

\subsection{The complexity of vertex-coloring problems}
In the version (also known as MMSNP) of the problem $\Col(\mathcal F)$ where vertex-colorings are considered instead of edge-colorings, the complexity of the problems is well-understood: this class of problems admits a P vs.\ NP-complete dichotomy.
Feder and Vardi initially showed in their seminal paper~\cite{FederVardi} that such problems are equivalent under randomized poly-time reductions to finite-domain CSPs.
While the reduction from MMSNP to CSP is immediate, the backwards reduction relies on the so-called \emph{Sparse Incomparability Lemma} which roughly states that whenever a finite constraint satisfaction problem has a no-instance, then it has no instances of arbitrarily large girth.
The construction of such a high girth instance was originally probabilistic and was later derandomized by Kun~\cite{Kun}.
Since Bulatov~\cite{Bulatov} and Zhuk~\cite{Zhuk}  proved that finite-domain CSPs obey a complexity dichotomy, this implies a complexity dichotomy for vertex-coloring problems by the reduction above.
It has been an open problem in the area whether such a proof strategy relying on the sparse incomparability lemma could work for edge-colorings. However, this problem remains open, despite nearly two decades of effort since Madelaine's introduction of the logic.

More recently, Bodirsky, Madelaine, and Mottet~\cite{MMSNPjournal} obtained a new proof of the complexity dichotomy for MMSNP, relying on the algebraic approach to \emph{infinite-domain} constraint satisfaction rather than the sparse incomparability lemma. 
While vertex-coloring problems are not expressible as finite-domain CSPs, it was shown in~\cite{BodirskyDalmau} that for every family $\mathcal F$ of connected structures, there is a countably infinite vertex-colored structure $\mathbb C_\mathcal F$ such that a vertex-colored structure maps homomorphically to $\mathbb C_\mathcal F$ if and only if it is $\mathcal F$-free.
Their dichotomy proof consists of two main steps.
The first is a reduction of the dichotomy question to the so-called \emph{precolored} setting, where some vertices of the structure given as input to the problem are already colored, and the problem is to decide the existence of an extension of this coloring to an $\mathcal F$-free vertex-coloring of the whole structure.
This problem, denoted by~$\Ext(\family F)$ in the following, can in principle be harder than the non-precolored problem; however, it is shown in~\cite{MMSNPjournal} that without loss of generality, the vertex-coloring versions of the problems $\Ext(\family F)$ and $\Col(\family F)$ are computationally equivalent.
The second step is the poly-time equivalence between $\Ext(\family F)$ and the CSP of the finite structure $\mathbb D_\mathcal F$ that was used in the original reduction of Feder and Vardi~\cite{FederVardi} as well.
To bypass the use of the sparse incomparability lemma, the equivalence is obtained indirectly in the following way: if the CSP of $\mathbb D_{\mathcal F}$ is NP-hard, then an algebraic witness of this hardness exists.
The major result of Bodirsky, Madelaine, and Mottet~\cite{MMSNPjournal} is that such an algebraic witness can be lifted to prove the NP-hardness of the problem $\Ext(\family F)$.
Thus, the two problems are equivalent under poly-time reductions.
Since then, this proof strategy has been generalized in a variety of other contexts, developing what is now known as the theory of smooth approximations~\cite{SmoothApproximations,MMSNPwidth,SmoothApproximations3,Feller,Bitter}.
Although this strategy can in principle be applied to the case of edge-coloring problems, both subproblems above (the reduction to the precolored setting, and the lifting of algebraic NP-hardness witnesses) remain hitherto unsolved.

\subsection{Contributions}

This paper provides new methods for investigating the complexity of edge-coloring problems.
They take inspiration from the tools that were used to study the vertex-coloring problems from both aforementioned papers~\cite{MMSNPjournal,FederVardi}.
Our first contribution is a complexity dichotomy for many natural and interesting cases of families $\family F$ consisting of odd cycles and cliques.

\begin{restatable}{theorem}{mainapplication}\label[theorem]{thm:main}
    Let $\mathcal F$ be a finite set of monochromatic odd cycles or cliques,
    and let $\structure S$ be an odd cycle or a clique.
    Then the problem $\Col(\mathcal F)$ restricted to $\structure S$-free instances is either trivial or NP-complete.
\end{restatable}

Note that $\Col(\mathcal F)$ is trivial if and only if $\family F$ does not contain a monochromatic obstruction for some color (in this case, we just assign to every edge the missing color). 

The hardness of the problems from the scope of \Cref{thm:main} allows us to extend this complexity classification to much larger classes of colored obstructions.

\begin{corollary}\label{corollary:main-cons}
    Let $\mathcal M$ be a finite set of monochromatic odd cycles or cliques that contains a monochromatic graph for each color $i\in [c]$, and $\mathcal N$ be a set of colored graphs such that there exists an odd cycle or a clique $\structure S$ with the property that 
    $
        (\structure G,\chi) \in \mathcal{M} \Rightarrow \structure S \not\rightarrow \structure G \text{ and } (\structure G,\chi) \in \mathcal{N} \Rightarrow \structure S \rightarrow \structure G.
    $
    Then $\Col(\family M\cup\family N)$ is NP-complete.
\end{corollary}

Most of the known results concerning edge-coloring problems with forbidden patterns are in fact restricted to the case where the obstructions are monochromatic cliques (or at least 3-connected graphs), or the case where the obstructions are monochromatic cycles.
Ours is the only complexity result with mixed obstructions, also considering non-monochromatic obstructions.
As an example of a consequence of our result, we obtain  the following refinement of the theorem of Burr~\cite{Burr} that $\Col\big(\structure{K}^{(1)}_n,\structure{K}^{(2)}_n\big)$ is NP-hard for all $n\geq 3$:
\begin{corollary}
    For  $n\geq 3$, the monochromatic $\structure{K}_n$ problem is NP-hard on $\structure{K}_{n+1}$-free graphs.
\end{corollary}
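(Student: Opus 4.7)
The plan is to derive this as a direct instantiation of \Cref{thm:main-application} with $m = n+1$. I would take
\[
\family F \;=\; \bigl\{\structure K_n^{(1)}, \structure K_n^{(2)}\bigr\} \;\cup\; \bigl\{\text{all 2-edge-colorings of } \structure K_{n+1}\bigr\}.
\]

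First I would verify that this $\family F$ fits the hypotheses of \Cref{thm:main-application}. Every member of $\family F$ is a coloring of a clique of size $n$ or $n+1$, and $n \geq 3$, so the first bullet of the theorem holds. No cycles appear, so the cycle condition is vacuous. Since \emph{all} edge-colorings of $\structure K_{n+1}$ belong to $\family F$, the ``all-or-nothing'' condition on non-monochromatic colorings of $\structure K_m$ is satisfied as well.

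Next I would check that we land in the NP-complete branch of the dichotomy, by exhibiting, for each color $i \in \{1,2\}$, a monochromatic obstruction of color $i$ in $\family F$ that does \emph{not} contain $\structure K_m = \structure K_{n+1}$. The graph $\structure K_n^{(i)}$ itself does the job, since $\structure K_n$ does not contain $\structure K_{n+1}$ as a subgraph. Thus the theorem yields that $\Col(\family F)$ is NP-complete.

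Finally I would argue that $\Col(\family F)$ is (the same problem as) the monochromatic $\structure K_n$ edge-coloring problem restricted to $\structure K_{n+1}$-free graphs. The key observation is that a homomorphism from a clique (without loops) into a simple graph must be injective, so a homomorphism from $\structure K_n^{(i)}$ into $(\structure G, \xi)$ is exactly a monochromatic $\structure K_n$ subgraph of color $i$, and a homomorphism from some coloring of $\structure K_{n+1}$ into $(\structure G, \xi)$ is exactly a $\structure K_{n+1}$ subgraph of $\structure G$ witnessing that coloring; since \emph{all} colorings of $\structure K_{n+1}$ are in $\family F$, the latter condition simply forces $\structure G$ to be $\structure K_{n+1}$-free, independently of $\xi$. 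Hence a yes-instance of $\Col(\family F)$ is precisely a $\structure K_{n+1}$-free graph $\structure G$ that admits a 2-edge-coloring without monochromatic $\structure K_n$, which gives the claim. There is no real obstacle here: the work lies entirely in \Cref{thm:main-application}, and this corollary is a verification that the hypotheses hold and a reading of what $\Col(\family F)$ means for this particular $\family F$.
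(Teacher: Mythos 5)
Your proposal is correct and matches the paper's own (very terse) justification: the paper also instantiates $\family F$ as the two monochromatic colorings of $\structure K_n$ together with \emph{all} colorings of $\structure K_{n+1}$, identifies $\Col(\family F)$ with the monochromatic-$\structure K_n$ problem on $\structure K_{n+1}$-free inputs, and invokes \Cref{thm:main-application} with $m=n+1$. You simply spell out the hypothesis checks (including that $\structure K_n^{(i)}$ is the required $\structure K_{n+1}$-free monochromatic obstruction of each color placing you in the NP-complete branch, and that homomorphisms from loopless cliques into simple graphs are injective), which the paper leaves implicit.
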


The most technical part of the proof of~\Cref{thm:main} is the reduction from $\Ext(\mathcal F)$ to $\Col(\mathcal F)$ in \Cref{sec:infinite_csp,section:coldet}.
It combines the model-theoretic approach of~\cite{MMSNPjournal} with the Ramsey-theoretic results of~\cite{NESETRIL1976243,HAN2015457}.
We show that, for families $\family F$ as in~\Cref{thm:main}, one can construct color-determiners and other types of gadgets introduced in Ramsey theory under the name of \emph{signal senders} by Burr, Erd\H{o}s, and Lov\'asz~\cite{BurrErdosLovasz}.
The existence of such gadgets has been confirmed in numerous cases involving monochromatic obstructions and very simple graphs (see~\cite{BurrNesetrilRoedl,RoedlSiggers,BoyadzhiyskaLesgourgues} for papers dedicated to the construction of such gadgets).
Our result is also a contribution towards the study of such gadgets in the setting of homomorphically forbidden graphs, and proving their existence is the core of the present paper.

The second result of our paper concerns the inexpressibility of edge-coloring problems as vertex-coloring problems (i.e., those problems corresponding to the logic MMSNP).
The fact that the former class is strictly more expressive than the latter is already known~\cite[Cor.~4.4]{OBDA}.
The property witnessing the inexpressibility was defined over a signature with 3 relation symbols of arities 1, 1, and 3, and the criterion of being in MMSNP was a version of a pebble game.
We give a new proof of this separation with very simple counterexamples.
In particular, the \textsc{Edge-No-Mono-Tri} problem satisfies the premise of the following theorem by~\cite[Theorem 1.1]{RoedlSiggers} and therefore cannot be expressed as a vertex-coloring problem, which answers the question of Madelaine~\cite[Remark 4.19]{MadelaineMMSNP2}.
\begin{restatable}{theorem}{separation}\label{separation_gmsnp}
    Let $\family E$ be a finite set of edge-colored graphs with the following properties:
    \begin{itemize}
        \item the class of $\family E$-free edge-colored graphs is closed under amalgamation over vertices,
        \item for every $K\geq 1$, there exists a graph $\structure{G}$ that does not admit an $\family E$-free coloring but every $K$-element subgraph of $\structure{G}$ does.
    \end{itemize}
    Then there is no finite set $\family F$ of vertex-colored graphs such that $\Col(\family F)=\Col(\family E)$.
\end{restatable}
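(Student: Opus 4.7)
I argue by contradiction, supposing there exists a finite set $\family F$ of vertex-colored graphs using $r$ colors with $\Col(\family F)=\Col(\family E)$. Let $K$ denote the maximum number of vertices of any graph in $\family F$; then a vertex-coloring of a graph is $\family F$-free iff its restriction to every $K$-element subgraph is $\family F$-free.

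The plan is to build a Fra\"iss\'e-limit bridge between the two logics. The amalgamation hypothesis on $\family E$ makes the class $\family C$ of finite $\family E$-free edge-colored graphs a Fra\"iss\'e class (joint embedding follows from amalgamation over the empty structure). Let $\structure{M}$ be its limit: a countable, $\omega$-homogeneous, $\omega$-categorical $\family E$-free edge-colored graph, and write $\structure{M}^-$ for the reduct to its underlying graph, whose age is exactly $\Col(\family E)$. Since $\Col(\family E)=\Col(\family F)$ by assumption, every finite subgraph of $\structure{M}^-$ admits an $\family F$-free $r$-vertex-coloring; a standard compactness argument (the set of $\family F$-free $r$-colorings is closed and nonempty on every finite restriction inside the compact product $\{1,\dots,r\}^{V(\structure{M}^-)}$) then yields a global $\family F$-free $r$-vertex-coloring $\chi\colon V(\structure{M}^-)\to\{1,\dots,r\}$.

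To produce the contradiction, I invoke the second hypothesis on $\family E$ with a parameter $N$ chosen sufficiently large compared to $K$, $r$, and the finitely many orbits of $\aut{\structure{M}}$ on $K$-tuples of $V(\structure{M})$ (finite by $\omega$-categoricity). This yields a graph $\structure{G} \notin \Col(\family E)$ all of whose $N$-element subgraphs lie in $\Col(\family E)$, and hence embed in $\structure{M}^-$ via some $\iota_H \colon H \hookrightarrow \structure{M}^-$; each pullback $\iota_H^{\ast}\chi$ is an $\family F$-free coloring of $H$. If the $\iota_H$ could be chosen so that the resulting pullbacks agree on overlapping $N$-subgraphs, then assembling them would produce a global $\family F$-free coloring of $\structure{G}$, placing $\structure{G}\in\Col(\family F)=\Col(\family E)$ and contradicting the choice of $\structure{G}$.

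The hard part is the coherent selection of the embeddings, since distinct $\iota_H$ may send a common vertex $v \in V(\structure{G})$ to points of $\structure{M}^-$ receiving different $\chi$-colors; a naive one-vertex-at-a-time extension fails precisely because of the locality gap that places $\structure{G}$ outside $\Col(\family E)$. I expect the resolution to proceed in the spirit of Ne\v{s}et\v{r}il--R\"odl-style Ramsey arguments inside $\omega$-categorical structures: across the family of all admissible embeddings into $\structure{M}^-$, each $K$-subset of $V(\structure{G})$ receives only finitely many $\aut{\structure{M}}$-orbit labels, and a repeated Ramsey extraction on these labels (possibly applied to a Ramsey expansion of $\structure{M}$) should yield a single globally consistent assignment of orbit labels to the $K$-subsets of $V(\structure{G})$. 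The vertex-coloring of $\structure{G}$ obtained by reading off $\chi$ along this consistent labeling is then $\family F$-free, because its restriction to each $K$-subgraph is realized inside the $\family F$-free colored structure $(\structure{M}^-,\chi)$.
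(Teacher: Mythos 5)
Your setup matches the paper's: assume a vertex-colored family $\family F$ with $\Col(\family F)=\Col(\family E)$ exists, let $K$ bound the sizes of its members, and invoke the second hypothesis to produce a graph $\structure G$ outside $\Col(\family E)$ whose small subgraphs are all inside. But from there you diverge, and the divergence is where the gap lies.

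Your strategy is to push $\structure G$ itself into $\Col(\family F)$ by pulling back a compactness-produced $\family F$-free coloring $\chi$ of the Fra\"iss\'e limit $\structure M$ along local embeddings of the small subgraphs of $\structure G$, and to use a Ramsey extraction to make these pullbacks agree. This cannot work as stated: under your standing assumption $\Col(\family F)=\Col(\family E)$, the graph $\structure G$ is a no-instance of $\Col(\family F)$, so no $\family F$-free coloring of $\structure G$ exists, coherent assembly or not. The failure of local-to-global consistency is precisely the phenomenon encoded by $\structure G$; no Ramsey argument over $\structure M$ can conjure a consistent choice of embeddings, because $\structure G$ does not even embed into $\structure M^-$ (it has no $\family E$-free coloring). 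Your sketch correctly flags the coherent-selection step as ``the hard part,'' but it is not a technical hurdle to be finessed --- it is a fact that the step must fail.

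What the paper does instead, and what you are missing, is that one should not try to color $\structure G$ at all. The paper applies the standard reduction from $\Col(\family F)$ to $\CSP(\structure H_{\family F})$ to turn $\structure G$ into a relational no-instance $\structure H$, then invokes the sparse incomparability lemma to replace $\structure H$ by a high-girth structure $\structure H'$ that is still a no-instance but whose ``blocks'' form a tree. Reading $\structure H'$ back as a graph $\structure G'$, one gets: (a) $\structure G'$ is a no-instance of $\Col(\family F)$ (because it maps onto a structure containing $\structure H'$), yet (b) every block of $\structure G'$ maps, via the homomorphism $\structure H'\to\structure H$, to a small subgraph of $\structure G$ and hence has an $\family E$-free edge coloring, and (c) the high girth forces any potential obstruction from $\family E$ to lie in the tree-like union of blocks, which is a vertex-amalgam of the $\family E$-free blocks and hence $\family E$-free by the first hypothesis. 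So $\structure G'\in\Col(\family E)\setminus\Col(\family F)$, contradicting $\Col(\family F)=\Col(\family E)$. The key tools you would need to add to your sketch are the passage to a high-girth surrogate (sparse incomparability) and the use of vertex-amalgamation closure to glue along the resulting tree structure; a Fra\"iss\'e limit and compactness are not the right machinery here.
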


The first assumption can be made without loss of generality, as for any class $\family E$ of edge-colored graphs, there exists $\family E'$ such that $\Col(\family E)=\Col(\family E')$ and such that the class of $\family E'$-free edge-colored graphs is closed under amalgamation over vertices.
The second assumption in~\Cref{separation_gmsnp} is mild, as it is equivalent to saying that the class of $\family E$-free colorable graphs is not definable by a first-order sentence~\cite{Rossman}.
This condition is clearly necessary to obtain a class that is definable in GMSNP and not in MMSNP, since any such coloring problem can be phrased with obstructions without using \emph{any} colors.
Note that this condition is decidable, which gives an algorithm for the question to decide whether a given GMSNP sentence corresponding to a graph edge-coloring problem is equivalent to an MMSNP sentence.

\section{Notations and Definitions}
\label{sec:prelims}

Let $\tau=\{R_1,\ldots,R_s\}$ be a set of relation symbols.
Each $R_i\in\tau$ is assigned some number $k_i\in\mathbb{N}$ called the \emph{arity} of $R_i$.
A \emph{$\tau$-structure} $\structure{A}$ is a tuple $(A; R_1^\structure{A},\ldots,R_s^\structure{A})$, where $A$ is a set, called \emph{domain}, and each $R_i^\structure{A}$, called \emph{relation}, is a subset of $A^{k_i}$.

For $\tau$-structures $\structure{A}$ and $\structure{B}$, a mapping $h\colon A\to B$ is a \emph{homomorphism} if, for every $R\in\tau$ and every $\bar a\in R^\structure{A}$, the tuple $h(\bar a)$ obtained by applying $h$ componentwise is contained in $R^\structure{B}$.
The decision problem $\CSP(\structure{A})$ asks whether, for a finite  $\structure{X}$, there is a homomorphism $\structure{X}\to\structure{A}$.

If not explicitly stated otherwise, $[c]$ denotes the set of edge colors.
An \emph{edge-colored graph} is a pair $(\structure{G},\chi^\mathbb G)$, where $\chi^\mathbb G\colon E^\structure{G}\to[c]$.
We write $(\mathbb G,\chi^\mathbb G)\to(\mathbb H,\chi^\mathbb H)$ if there is a homomorphism $h\colon \mathbb G\to\mathbb H$ such that $\chi^\mathbb H\circ h = \chi^\mathbb G$.
Homomorphisms define a preorder relation $\to$ which is the linear order 
$\dots\to\structure C_7\to\structure C_5\to\structure C_3=\structure K_3\to\structure K_4\to\structure K_5\to\dots$
when restricted to odd cycles (denoted $\mathbb C_\ell$ in the following) and cliques (denoted $\mathbb K_\ell$).

For a family $\family{F}$ of edge-colored graphs,  $(\structure{G},\chi)$ is \emph{$\family{F}$-free} if $(\structure{F},\phi)\not\to(\structure{G},\chi)$, for every $(\structure{F},\phi)\in\family{F}$.
Let $\forb{\family{F}}$ be the set of all finite $\family{F}$-free edge-colored graphs.
The decision problem $\Col(\family{F})$ asks whether, for an input graph $\structure{X}$, there exists $\chi^\mathbb X\colon E^\structure{X}\to[c]$ such that $(\structure{X},\chi^\mathbb X)$ is $\family{F}$-free; the set of available colors is always implicitly assumed to precisely consist of the colors appearing in $\family F$.

A \emph{partially edge-colored graph} is a pair $(\structure{G},\xi^\mathbb G)$, where $\xi^\mathbb G\colon S\to[c]$, for a subset of edges $S\subseteq E^\structure{G}$.
For $S\subseteq E^\structure{G}$, a mapping $\chi^\mathbb G\colon E^\structure{G}\to [c]$ is an \emph{extension} of $\xi^\mathbb G\colon S\to [c]$ if $\chi^\mathbb G\restriction_S = \xi^\mathbb G$.
The decision problem $\Ext(\family{F})$ asks whether, for a partially edge-colored graph $(\structure{X},\xi^\mathbb X)$, there exists $\chi^\mathbb X$ extending $\xi^\mathbb X$ such that $(\structure{X},\chi^\mathbb X)$ is $\family{F}$-free.

For two vertices $v,w \in G$ of a graph $\structure{G}$, the \emph{distance} between $v$ and $w$ is the number of edges in a shortest path connecting $v$ to $w$.
The distance $\dist(e,e')$ between two edges $e, e'\in E^\structure{G}$ is defined as the minimum distance between the vertices incident to $e$ and $e'$, respectively.
For a set of edges $S \subseteq E^\structure{G}$ and an edge $e\in E^\structure{G}$, the distance $\dist(e, S)$ is defined as $\min_{e'\in S}\dist(e,e')$.

A graph is \emph{connected} if it is not the disjoint union of two other graphs.
Let $\structure G$ and $\structure H$ be graphs with distinguished pairwise disjoint edges $e_1,\dots,e_k\in E^\structure G$ and $f_1,\dots,f_k\in E^\structure H$, respectively.
We denote the graph obtained by taking the disjoint union of $\structure G$ and $\structure H$ and by gluing $e_i$ and $f_i$ for all $i\in\{1,\dots,k\}$ by $\structure G\oplus\structure H$. 
In the case where $\mathbb G$ and $\mathbb H$ are partially colored with $\xi^\mathbb G$ and $\xi^\mathbb H$ such that $\xi^\mathbb G(e_i)=\xi^\mathbb H(f_i)$ for every $i$, we write $\xi^\mathbb G\oplus \xi^\mathbb H$ for the partial coloring of $\structure G\oplus\structure H$.
While the notation omits the distinguished edges, they are clear from context in the places where we use this notation; the notation also omits how the endpoints of $e_i$ and $f_i$ are glued, all our results do not depend on this choice unless stated explicitly.

Given a graph $\structure G$, edges $e_1,\ldots, e_k\in E^\mathbb G$, and $i\in [c]$, let $(i\setminus e_1\ldots e_k)$ be the partial coloring of $\structure G$ which assigns the color $i$ to all edges except for $e_1,\ldots,e_k$.
Let $\structure G$ be a graph, $\xi^\structure G$ be a partial coloring of its edges, and $e_1,\dots,e_k \in E^\mathbb G\setminus\dom{\xi^\structure G}$ be pairwise disjoint.
We say that $(\structure{G}, \xi^\structure G, e_1,\dots,e_k)$ is \emph{$\family F$-safe} if:
\begin{itemize}
    \item there exists an $\family{F}$-free extension of $\xi^\mathbb G$,
    \item for every $\family F$-free edge-colored graph $(\structure H,\gamma^{\structure H})$ and pairwise disjoint $f_1,\dots,f_k\in E^\structure H$, if there is an extension $\gamma^\structure G$ of $\xi^\structure G$ satisfying $\gamma^\structure G(e_i)=\gamma^{\structure H}(f_i)$ for all $i\in [c]$, then $(\structure G \oplus \structure H, \gamma^\structure G\oplus \gamma^{\structure H})$ is $\mathcal F$-free, where $\structure G \oplus \structure H$ is obtained by identifying $e_i$ with $f_i$ for all $i$.
\end{itemize}

Given $\family F$ as in the scope of~\Cref{thm:main}, we indicate via $\mathbb F_i$ the $\rightarrow$-minimum monochromatic graph in $\family F$ colored in $i$.
\begin{proposition}\label{easyobs}
Let $\mathcal F$ be in the scope of \Cref{thm:main}. Then the following properties hold:
    \begin{enumerate}[(1)]
        \item\label{itm:safenessamalgam} Let $(\structure G,\xi^\structure G, e_1,\dots,e_k),(\structure H,\xi^\structure H,f_1, \dots, f_k)$ be $\family F$-safe.
        Suppose that there exist an $\family F$-free extension $\chi^{\structure G}$ of $\xi^{\structure G}$ and an $\family F$-free extension $\chi^{\structure H}$ of $\xi^{\structure H}$ such that $\chi^{\structure G}(e_j)=\chi^{\structure H}(f_j)$ for all $j\in\{1,\dots,k\}$.
        Then $(\structure G \oplus \mathbb H,\xi^\structure G \oplus \xi^\structure H, e_1\equiv f_1,\ldots,e_k \equiv f_k)$ is $\family F$-safe, where $\structure G \oplus \mathbb H$ is the result of identifying $e_j$ and $f_j$ for all $1\leq j\leq k$.
        \item Let $(\structure G,\xi^\structure G, e_1),(\structure G,\xi^\structure G, e_2),(\structure H,\xi^\structure H,f)$ be $\family F$-safe.
        Suppose that there exist an $\family F$-free extension $\chi^{\structure G}$ of $\xi^{\structure G}$ and an $\family F$-free extension $\chi^{\structure H}$ of $\xi^{\structure H}$ such that $\chi^{\structure G}(e_1)=\chi^{\structure H}(f)$.
        Then $(\structure G \oplus \mathbb H,\xi^\structure G \oplus \xi^\structure H, e_2)$ is $\family F$-safe, where $\structure G \oplus \mathbb H$ is the result of identifying $e_1$ and $f$.
        \item\label{itm:cyclesafe}
        For every $i\in [c]$ and $e,e'\in E^{\mathbb F_i}$, the partially colored graph $(\structure F_i,(i\setminus ee'), e)$ is $\mathcal F$-safe.
        \item\label{item:4easyprop} For every color $i\in [c]$, edges $e,e' \in E^{\structure F_i}$ of $\mathbb F_i$, and a colored graph with a specified edge $(\structure H, \chi^{\structure H}, f^{\structure H})$ such that $\chi^{\structure H}(f^{\structure H}) \neq i$, we have that $(\structure F_i\oplus \structure H, (i\setminus ee') \oplus \chi^{\structure H}, e')$ is $\family F$-safe, where $\structure F_i\oplus \structure H$ is obtained by identifying $e$ with $f$.
    \end{enumerate}
\end{proposition}
\begin{proof}
    The first two items are a simple application of the associativity of the operation of gluing along an edge on colored graphs.
    Moreover, Item \ref{itm:cyclesafe} is a special case of Item \ref{item:4easyprop}, so we only prove the latter.
    For the sake of readability, define $\structure T := \structure F_i \oplus \structure H$, $\xi^{\structure T} := (i\setminus ee')\oplus \chi^{\structure H}$ and $f^{\structure T}:= e'$.
    Thus, we need to show that $(\structure T, \xi^{\structure T}, f^{\structure T})$ is $\family F$-safe.
    Let $(\structure G,\chi^\structure G )$ be an $\family F$-free colored graph with a distinguished edge $f^{\structure G}$ and such that there exists a full extension $\gamma$ of $\xi^{\structure{T}}$ with $\chi^\structure G(f^{\structure G})=\gamma(f^{\structure{T}})$.
    Let $\structure G \oplus \structure{T}$ be the amalgam of $\structure{T}$ and $\structure G$ over the edges $f^{\structure G}$ and $f^{\structure{T}}$, respectively.
    We show that the image of a hypothetical homomorphism from a member of $\family F$ to $(\structure G \oplus \structure{T},\chi^\structure G \oplus \gamma)$, has to be fully contained in $(\structure G,\chi^\structure G )$ or fully in $(\structure T, \gamma)$.
    Since $(\structure G,\chi^\structure G )$ and $(\structure T, \gamma)$ are $\family F$-free, this shows that $(\structure G \oplus \structure{T},\chi^\structure G \oplus \gamma)$ is $\family F$-free.
    Thus, take $(\structure F,\chi^{\structure F})\in\family F$.
    If $\structure F$ is a clique, there is nothing to prove.
    Otherwise, we are in the case that $\structure F$ is an odd cycle.
    By our second standing assumption discussed at the beginning of~\Cref{section:main_result}, we consider this hypothetical homomorphism $h$ to be injective.
    Suppose that the image of $h$ intersects both $\structure{T}$ and $\structure G$.
    We start by observing that the image must contain both vertices in the intersection otherwise we get a cut vertex in the image which combined with the fact that $\structure F$ is an odd cycle implies a contradiction.
    By the fact that the image must contain also vertices that are in $\structure{T}$ and not in $\structure G$ we get that it must contain an edge colored in $i$. 
    Moreover, the length of $\structure F$ has to be strictly larger than the length of $\structure F_i$, but this is not possible for the choice of $\structure F_i$ thus we obtain a contradiction.
\end{proof}

\section{Proof of dichotomy}\label{section:main_result}
We fix a family $\family F$ that respects the premise of~\Cref{thm:main}. We start by arguing that we can make two standing assumptions for the rest of the section.
First, without loss of generality, we can assume that for every color $i\in [c]$ there is an $\structure S$-free monochromatic obstruction $(\mathbb F_i,i)\in\mathcal F$; otherwise every $\structure S$-free graph has an $\family F$-free coloring obtained by coloring every edge with the color $i$.
Second, whenever $(\mathbb F_i, i)\in \mathcal F$ and the graph $\mathbb F_i$ is isomorphic to an odd cycle $\mathbb C_{\ell_i}$, we assume that $\mathcal F$ contains the $i$-monochromatic odd cycle $(\mathbb C_k, i)$, for every odd $3\leq k<\ell_i$; these obstructions are implied by $(\mathbb C_{\ell_i},i)$, so this does not change the problem $\Col(\mathcal F)$.
Thus, we can assume without loss of generality that for every edge-colored graph $(\mathbb G, \chi^\mathbb G)$, if $(\mathbb F_i, i)\to (\mathbb G, \chi^\mathbb G)$ for some monochromatic odd cycle $(\mathbb F_i, i)$ in $\mathcal F$, then there is an injective homomorphism from $(\mathbb F_i',i)$ to $(\mathbb G, \chi^\mathbb G)$ for some $(\mathbb F_i',i)\in\mathcal F$.

We prove \Cref{thm:main} by showing that $\Col(\family{F})$ is poly-time equivalent to an NP-complete finite-domain constraint satisfaction problem.
The proof consists of the following steps.
\begin{description}
    \item[Infinite-domain CSP] We first set up the tools that we need from the theory of infinite-domain constraint satisfaction to produce a reduction from $\Ext(\family F)$ to $\Col(\family{F})$.
    In graph-theoretic terms, the main result of this section (\Cref{cor:mcc_pp}) essentially states that for every $\mathcal F$-free edge-colored graph $(\mathbb G, \gamma^\mathbb G)$, there exists a graph $\mathbb D$ containing $\structure G$ as a subgraph and such that for every graph $\mathbb D'$ containing $\mathbb D$, if  $\mathbb D'$ has an $\mathcal F$-free coloring, then it has an $\mathcal F$-free coloring that extends $\gamma^\mathbb G$.
    This graph $\mathbb D$ is then used to prove that $\Ext(\family F)$ reduces in poly-time to $\Col(\family F)$.
    The existence of $\mathbb D$ is not constructive and is proved by compactness; in particular, the existence of $\mathbb D$ alone is not sufficient for a poly-time reduction.
    \item[Reduction from $\Ext(\family F)$ to $\Col(\family{F})$]
    We prove the existence of \emph{color-determiners}, which are partially colored graphs that impose that a certain fixed edge receives a specified color in any valid extension.
    Using those color-determiners, we show that it is enough to use a graph $\structure D$ as above in the case where $\structure G$ has constant size.
    \item[Reduction from finite-domain CSP to $\Ext(\family F)$]
    Here, we prove the existence of \emph{color-equality gadgets}, which are partially colored graphs that impose that two certain fixed disjoint edges receive the same color in all valid colorings.
    \item[Hardness proof] We prove that the finite-domain CSP equivalent to our problem is NP-complete using the algebraic approach to constraint satisfaction.
\end{description}
\subsection{Infinite-domain CSP}\label{sec:infinite_csp}
The goal of this section is to prove \Cref{cor:mcc_pp}.
First, let us introduce necessary model-theoretic notions which we use.

For $\tau$-structures $\mathbb A$ and $\mathbb B$, a homomorphism $e\colon \mathbb A \to \mathbb B$ is an \emph{embedding} if it is injective and, for every $R\in \tau$ of arity $k$ and every $\bar a\in A^k$, we have $\bar a \in R^\mathbb A$ if and only if $e(\bar a) \in R^\mathbb B$.
A surjective embedding is called an \emph{isomorphism}.
An isomorphism from a structure $\mathbb A$ to itself is called an \emph{automorphism}.
The set of all automorphisms of $\mathbb A$, denoted $\mathrm{Aut}(\mathbb A)$, forms a group which acts with permutations on $A^n$, for every $n>0$.
A $\tau$-structure $\mathbb A$ is \emph{$\omega$-categorical} if, for every $n>0$, the action $\mathrm{Aut}(\mathbb A)\curvearrowright A^n$ has finitely many orbits.

Let $\mathbb A_1$ and $\mathbb A_2$ be two edge-colored graphs such that they have a common induced (possibly empty) subgraph $\mathbb B$, i.e., for every $i\in\{1,2\}$ there exists an embedding $e_i\colon \mathbb B \hookrightarrow \mathbb A_i$.
An edge-colored graph $\mathbb C$ is an \emph{amalgam} of $\mathbb A_1$ and $\mathbb A_2$ over $\mathbb B$ if, for every $i\in\{1,2\}$ there exists an embedding $f_i\colon \mathbb A_i \hookrightarrow \mathbb C$ such that $f_1\circ e_1 = f_2 \circ e_2$.
A class $\mathcal C$ has the \emph{amalgamation property} (AP) if for every $\mathbb A_1, \mathbb A_2, \mathbb B\in \mathcal C$ and every $e_i\colon \mathbb B \hookrightarrow \mathbb A_i$, the class $\mathcal C$ contains an amalgam of $\mathbb A_1$ and $\mathbb A_2$ over $\mathbb B$.
A relational structure $\mathbb H$ is called \emph{homogeneous} if every isomorphism between its finite substructures extends to an automorphism of the whole structure.
An \emph{orbit} under $\Aut(\structure H)$ is a set of tuples of the form $(\alpha(a_1),\dots,\alpha(a_n))$ for some $(a_1,\dots,a_n)\in H^n$ and where $\alpha$ ranges over the elements of $\Aut(\structure H)$.
The homogeneity of $\structure H$ means exactly that every finite substructure of $\mathbb H$ defines a unique orbit of $\Aut(\mathbb H)$; thus, one can label every orbit of $\Aut(\mathbb H)$ by a finite structure.
The amalgamation property and homogeneity are connected by the following theorem of Fra\"{i}ss\'{e}.
\begin{theorem}[Th.~6.1.2 in~\cite{hodges_book}]\label{th:fraisse}
For every family $\mathcal C$ which is closed under taking substructures and which has the AP, there is a homogeneous structure $\mathbb H$ such that, for every finite structure $\mathbb G$, there is an embedding $\mathbb G \hookrightarrow \mathbb H$ if and only if $\mathbb G \in \mathcal C$.
The structure $\mathbb H$ is unique up to isomorphism and is called the \emph{Fra\"{i}ss\'e-limit} of $\mathcal C$.
\end{theorem}
Denote by $\mathrm{fm}(\mathcal F)$ the set of all Yes instances of $\Col(\mathcal F)$, i.e., all those graphs $\mathbb G$ such that at least one edge-coloring of $\mathbb G$ belongs to $\mathrm{Forb}(\mathcal F)$.
For relational signatures $\tau \subset \sigma$, say that a $\sigma$-structure $\mathbb B$ is a \emph{$\sigma$-expansion} of a $\tau$-structure $\mathbb A$, if $A=B$ and $R^\mathbb A = R^\mathbb B$ for every $R\in \tau$; in this case $\mathbb A$ is the \emph{$\tau$-reduct}.
A class of $\sigma$-structures $\mathcal D$ is an \emph{expansion} of a class of $\tau$-structures $\mathcal C$ if for every $\mathbb D\in\mathcal D$, the $\tau$-reduct $\mathbb D^\tau$ belongs to $\mathcal C$.
If $\mathcal D$ is an expansion of $\mathcal C$, then $\mathcal C$ is the \emph{reduct} of $\mathcal D$.

The first step is to construct an infinite graph $\mathbb G_\mathcal F$ such that $\mathrm{CSP}(\mathbb G_\mathcal F)$ and $\Col(\mathcal F)$ define the same problem.
A standard way of constructing $\mathbb G_\mathcal F$ is to take the Fra\"{i}ss\'{e}-limit (cf. \Cref{th:fraisse}) of a class which has the amalgamation property (AP).
In the case where all the graphs of $\mathcal F$ are cliques, the class of all finite $\mathcal F$-free edge-colored graphs $\mathrm{Forb}(\mathcal F)$ already has the AP.
In this case, the graph-reduct of the Fra\"{i}ss\'{e}-limit of $\mathrm{Forb}(\mathcal F)$ is the desired graph $\mathbb G_\mathcal F$.
However, if $\mathcal F$ has odd cycles, $\mathrm{Forb}(\mathcal F)$ does not have the AP.
For example, if two paths of the same color have lengths $2$ and $3$, they are both $\mathcal F$-free, but amalgamation over their endpoints will contain a monochromatic $\mathbb C_5$.
However, it is still possible to expand $\operatorname{Forb}(\mathcal F)$ with finitely many new relations so that the resulting expansion has the AP.
We describe this construction below.

A \emph{piece} of an edge-colored cycle $(\mathbb C_\ell, \chi^{\mathbb C_\ell})$ is a triple $(\mathbb P_k, \chi^{\mathbb P_k}, \bar r)$, where $\bar r = (r_1,r_2)\in (C_\ell)^2$ is a pair of distinct nonadjacent vertices of $\mathbb C_\ell$,   $\mathbb P_k$ is a path between $r_1$ and $r_2$ of length $k\in\{2,\ldots,\ell-2\}$, and $\chi^{\mathbb P_k}$ is the restriction of $\chi^{\mathbb C_\ell}$ on $E^{\mathbb P_k}$.
Let $\sigma$ be the signature consisting of the edge-relation $E$ and all the edge-colors $[c]$ seen as binary relations.
Let $\rho$ consist of binary relation symbols which are in one-to-one correspondence with the pieces of edge-colored cycles in $\mathcal F$.
Let $\mathcal{HN}$ be the set of all (induced) substructures of every finite $(\sigma\cup\rho)$-structure $\mathbb G$ that satisfy the following properties:
\begin{itemize}
    \item the $\sigma$-reduct $\mathbb G^\sigma$ is in $\mathrm{Forb}(\mathcal F)$,
    \item for every piece $(\mathbb P, \chi^{\mathbb P}, \bar r)$, there is a homomorphism $h$ from $(\mathbb P, \chi^{\mathbb P})$ to $\mathbb G^\sigma$ if and only if $(h(r_1),h(r_2))$ is contained in the corresponding $\rho$-relation of $\mathbb G$.
\end{itemize}
Expanding the relational signature with  $\rho$-relations in order to get the AP is known as the \emph{Hubi\v{c}ka-Ne\v{s}et\v{r}il-construction} due to their results in~\cite{hubicka_nesetril2015,hubicka_nesetril2016}.
It is clear that the class $\mathcal{HN}$ is closed under taking substructures.
The proof that $\mathcal{HN}$ has the AP is the same as, for example, the proof of Theorem 4.3.8 in~\cite{Bodirsky_book}.
Let $\mathbb{HN}$ be the Fra\"{i}ss\'e-limit of $\mathcal{HN}$.

A countably infinite structure $\mathbb A$ is a \emph{model-complete core} if, for every endomorphism $e\in \mathrm{End}(\mathbb A)$ and every finite $F\subset A$, there is an automorphism $\alpha \in \Aut(\mathbb A)$ such that $e\restriction_F = \alpha\restriction_F$.
Let $(\mathbb C,\not=)$ be the model-complete core of $(\mathbb{HN},\not=)$, where $\not=$ stands for the inequality relation.
Let $\mathcal C$ be the set of finite substructures of $\mathbb C$.
Note that $\structure C$ is homogeneous, and therefore the orbits of pairs of distinct elements under $\Aut(\structure C)$ correspond to the 2-element substructures of $\mathbb C$. We give below a combinatorial description of such orbits.

For a $(\sigma\cup\rho)$-structure $\mathbb A$, denote by $\rho^{-1}(\mathbb A)$ the $\sigma$-structure obtained from $\mathbb A$ by replacing each $\rho$-edge $xy$ with a copy of the corresponding piece with endpoints in $x$ and $y$.
For $X\subseteq \rho$ and a color $i\in [c]$, let $\mathbb E_X^i$ be a copy of $\mathbb K_2$ colored with $i$ and where every relation in $X$ is imposed on the vertices.
We call such structures \emph{$(\sigma\cup\rho)$-edges}.
Call a $(\sigma\cup\rho)$-edge $\mathbb E_X^i$ \emph{maximal} if $\rho^{-1}(\mathbb E_X^i)$ is $\mathcal F$-free and $\rho^{-1}(\mathbb E^i_Y)$ is not $\mathcal F$-free, for every $Y\supsetneq X$.
\begin{proposition}\label{prop:mccmaximal}
Every $(\sigma\cup\rho)$-edge of $\mathbb C$ is maximal.
\end{proposition}
\begin{proof}
The proof is essentially the same as the proof of Lemma 4.23 in~\cite{MMSNPjournal}.
Indeed, suppose that $\mathbb C$ contains a $(\sigma\cup\rho)$-edge $\structure E^i_X$ that is not maximal.
The fact that it is not maximal simply means that for some $Y\subseteq\rho$ we have $X\subsetneq Y$ and that $\mathbb E_Y^i$ is also a substructure of $\mathbb C$.
Let $\bar s\in C^2$ be in the 2-orbit defined by some $(\sigma\cup\rho)$-edge $\mathbb E_X^i$ and $(\mathbb P,\chi^\mathbb P, \bar r)$ be a piece corresponding to a relation in $Y\setminus X$, then let $\mathbb C'$ be obtained from $\mathbb C$ by attaching a copy of $(\mathbb P,\chi^\mathbb P, \bar r)$ to $\bar s$ and adding $\bar s$ to the corresponding $\rho$-relation $S$.
Note that $(\mathbb C,\neq)$ maps homomorphically to $(\mathbb C',\neq)$ by the identity mapping.
Conversely, every finite substructure of $(\structure C',\neq)$ has a homomorphism to $(\mathbb C,\neq)$, and therefore by compactness we obtain a homomorphism $(\structure C',\neq)\to(\structure C,\neq)$.
Composing these two homomorphisms gives us an endomorphism of $(\mathbb C,\neq)$ which must be an embedding since $(\mathbb C,\neq)$ is a core.
Thus, the identity map is an embedding $(\mathbb C,\neq)\to(\mathbb C',\neq)$
but the image of $\bar s$ belongs to the relation $S$ while $\bar s$ does not, which is a contradiction.
\end{proof}

For two structures $\mathbb A, \mathbb B \in \mathcal C$, denote by $\binom{\mathbb B}{\mathbb A}$ the set of all embeddings $\mathbb A \hookrightarrow \mathbb B$.
A class $\mathcal C$ of finite structures  has the \emph{Ramsey property} (RP) if for every $\mathbb A,\mathbb B \in \mathcal C$ and every $n>0$, there is $\mathbb C \in \mathcal C$ such that for every $\chi \colon \binom{\mathbb C}{\mathbb A} \to [n]$, there is $f\in \binom{\mathbb C}{\mathbb B}$ such that for each two $g_1,g_2\in \binom{\mathbb B}{\mathbb A}$, we have $\chi(f\circ g_1) = \chi(f\circ g_2)$.
A relational structure is \emph{Ramsey} if the class of its finite (induced) substructures has the Ramsey property.

Let $\mathcal C^<$ be the class of all $(\sigma\cup\rho\cup\{<\})$-structures whose $(\sigma\cup\rho)$-reduct is in $\mathcal C$ and whose $<$-reduct is isomorphic to a linear order.
Let $(\mathbb C, <)$ expand $\mathbb C$ with a generic linear order, i.e., $(\mathbb C, <)$ is the Fra\"iss\'e-limit of $\mathcal C^<$.
By Theorem 4 in~\cite{cores_ramsey}, the structure $(\mathbb C, <)$ is Ramsey.

Let $\mathcal O$ be the set of all linearly ordered $(\sigma\cup\rho)$-edges $(\mathbb E_X^i,<)$ in $\mathcal C^<$.
Call $r\colon \mathcal O \to \mathcal O$ a \emph{recoloring} if there is  $\hat r \colon \mathcal C^< \to \mathcal C^<$ such that, for every $\mathbb B\in \mathcal C^<$, $\hat r(\mathbb B)$ is a structure on the same set of vertices $B$, and for every $(\mathbb E_X^i,<)\in \mathcal O$ and every $e\colon E_X^i \to B$, $e$ is an embedding $(\mathbb E_X^i,<)\hookrightarrow\mathbb B$ if and only if $e$ is an embedding of $r(\mathbb E_X^i,<)$ into $\hat r(\mathbb B)$.
In order to prove \Cref{cor:mcc_pp}, we need first to prove \Cref{lem:recoloring_bijective} which assures that every recoloring, for given $\mathcal F$, must be bijective.
Proving \Cref{lem:recoloring_bijective} is the technical core of \Cref{sec:infinite_csp}.

If $\mathbb S = \mathbb C_\ell$, then put $m:=\ell$, and if $\mathbb S = \mathbb K_\ell$, then put $m:=0$.
Note that all $\rho$-relations of maximal $(\sigma\cup\rho)$-edges are symmetric, which is obvious for monochromatic pieces.
If $\structure S$ is an odd cycle, being $\mathbb S$-free simply means that every edge-coloring of $\mathbb S$ is forbidden, and thus if $\rho^{-1}(\mathbb E^i_X)$ has a non-monochromatic path $\mathbb P_k$ attached to the edge $E^i_X$, then we also have $(\mathbb P_k,\chi)$ attached to the edge $E^i_X$, for every possible edge-coloring $\chi\colon E^{\mathbb P_k}\to [c]$ of $\mathbb P_k$.
This implies that $(\mathbb E_X^i,<)$ is isomorphic to $(\mathbb E_X^i,>)$, so $\hat r$ defines a mapping from $\mathcal C$ to $\mathcal C$ and we can omit the sign $<$ in the context of recoloring.

Now, for $i\in [c]$, let $\mathbb E^i_\mathrm{half}$ be the $(\sigma\cup\rho)$-edge whose $\sigma$-reduct is the $i$-colored edge and which induces the following $\rho$-relational tuples on its two vertices.
\begin{itemize}
    \item For every $j\not=i$ such that $\mathbb F_j=\structure C_{\ell_j}$ is an odd cycle and every $k$ such that $\max(m,\ell_j/2+1)\leq k\leq \ell_j-2$, $\mathbb E^i_\mathrm{half}$ has the $(\mathbb P_{k}, j)$-tuple.
    \item For every odd $k\leq m-2$ and every $\chi\colon E^{\mathbb P_k}\to [c]$, $\mathbb E^i_\mathrm{half}$ has the $(\mathbb P_k, \chi)$-tuple.
    \item If $\mathbb F_i=\mathbb C_{\ell_i}$ is an odd cycle, then, for every odd $3\leq k \leq \ell_i-2$, $\mathbb E^i_\mathrm{half}$ has the $(\mathbb P_k, i)$-tuple.
\end{itemize}

\begin{lemma}\label{claim:e_half_is_maximal}
There is a unique maximal $(\sigma\cup\rho)$-edge with this property.
\end{lemma}
\begin{proof}
    Suppose that, for some $i\in[c]$, $\mathbb E^i_\mathrm{half}$ is not maximal.
    This implies that, for some $j\not=i$ and for some $k' < \max(m,\ell_j/2+1)$ we can attach the path $(\mathbb P_{k'}, \gamma)$ to $\rho^{-1}(\mathbb E^i_\mathrm{half})$ and the result will be $\mathcal F$-free.
    If $k'<m$ and $k'$ is odd and $\gamma$ is non-monochromatic, then this $\rho$-relation is already in $\mathbb E^i_\mathrm{half}$.
    If $k'< m$ and $k'$ is even, then the graph contains an odd cycle of length at most $m$ and thus is not $\mathcal F$-free.
    If $\gamma$ is $j$-monochromatic, for some $j\in[c]$, and if $k' \leq \ell_j/2$, then $\ell_j-k' > \ell_j/2$, so $\rho^{-1}(\mathbb E^i_\mathrm{half})$ already contains a copy of $(\mathbb P_{\ell_j-k'},j)$, and gluing it with $(\mathbb P_{k'}, \gamma)$ results in $(\mathbb C_{\ell_j}, j)$, so the result is not $\mathcal F$-free.
\end{proof}

\begin{lemma}\label{proposition:Ehalf}
    Let $\structure E^i_{\mathrm{full}}$ consist of two copies of $\structure E^i_{\mathrm{half}}$ glued on a vertex, and let $u,v$ be the two vertices of degree $1$. 
    Let $j$ be a color other than $i$.
    Then every path of $j$-colored edges from $u$ to $v$ in $\rho^{-1}(\structure E^i_{\mathrm{full}})$ has length greater than the length of all cycles of color $j$ in $\family F$.
\end{lemma}
\begin{proof}
    Let $(\structure C_\ell,j)$ be a cycle in $\family F$ of color $j\neq i$.
    The shortest $j$-monochromatic path from $u$ to $v$ has length at least $2(\ell_j/2 + 1)>\ell_j$.
\end{proof}

\begin{lemma}\label{lem:retraction}
    Let $r$ be a recoloring.
    Then, for some $n>0$, $r^n$ is a \emph{retraction}, i.e., for every $\mathbb E^i_X\in r(\mathcal O)$, we have $r^n(\mathbb E^i_X) = \mathbb E^i_X$.
\end{lemma}
\begin{proof}
    This simply follows from the fact that any map $r\colon A\to A$ where $A$ is a finite set is such that for some $m>0$, $r^m$ induces a permutation on its image, and then for $m'$ large enough $(r^m)^{m'}=r^{mm'}$ is the identity function on its image.
\end{proof}

Further, we assume without losing any generality that $r$ is already a retraction.

\begin{proposition}\label{prop:base_half}
    Let $r$ be a recoloring such that $r(\mathbb E^i_\mathrm{half}) = \mathbb E^j_X$.
    Then, either $i=j$ or ($\mathbb F_i \to \mathbb F_j$ and $\mathbb F_j \not\to \mathbb F_i$).
\end{proposition}
\begin{proof}
    For the contrary, assume that $r(\mathbb E^i_\mathrm{half}) = \mathbb E^j_X$ for $i\neq j$ and that $\mathbb F_i \not\to \mathbb F_j$. 
    Notice that the latter implies in particular that $\structure F_j \to \structure F_i$ since homomorphism preorder forms a linear order once restricted on odd cycles and cliques.
    
    \textbf{Case $\mathbb F_j=\mathbb C_{\ell_j}$ is an odd cycle.}
    Define $\structure W$ in the signature $\sigma\cup \rho$ as the cycle $\mathbb C_{\ell_j}$ where we assign $\mathbb E^i_\mathrm{half}$ on two adjacent edges and $\mathbb E^j_X$ on the remaining ones.
    We prove that $\rho^{-1}(\structure W)$ is $\family F$-free and this would contradict the fact that $r$ is a recoloring since $\rho^{-1}(\hat r(\structure W))$ contains $(\mathbb C_{\ell_j}, j)$ and thus is not $\family F$-free.
    To prove that $\rho^{-1}(\structure W)$ is $\family F$-free, we assume that an obstruction maps to it and we show that this cannot happen. 
    Since $\ell_j > m$ and since both $\mathbb E^i_\mathrm{half}$ and $\mathbb E^j_X$ are in $\mathcal C$, a non-monochromatic obstruction cannot map to $\rho^{-1}(\structure W)$ and if some monochromatic obstruction maps to $\rho^{-1}(\structure W)$, then all vertices of $\mathbb C_{\ell_j}$ must be in the image.
    For any color $k\not=i$, the image of a $k$-monochromatic cycle must go through some paths attached to both $\mathbb E^i_\mathrm{half}$-edges but this is not possible by \Cref{proposition:Ehalf}.
    So, we can assume that the obstruction is $i$-monochromatic. 
    If $\structure F_i$ is a clique, then there is nothing to prove. 
    If $\structure F_i = \mathbb C_{\ell_i}$ is an odd cycle, then, since the $\sigma$-reduct of $\mathbb E^j_X$ is a $j$-colored edge and $j\not=i$, the homomorphic image of $\mathbb C_{\ell_i}$ must go through some paths attached to  $\mathbb E^j_X$-edges, which cannot happen because we assumed $\structure F_j \to \structure F_i$.

    \textbf{Case $\mathbb F_i = \mathbb K_{\ell_i},\;\mathbb F_j=\mathbb K_{\ell_j}$ -- cliques.}
    Take a copy of $\mathbb K_{\ell_j}$, choose a vertex $v\in K_{\ell_j}$ and assign the edges incident to $v$ to the type $\mathbb E^i_\mathrm{half}$ and assign the remaining edges to the type $\mathbb E^j_X$, let $\mathbb W$ be the result.
    If $\rho^{-1}(\mathbb W)$ is not $\mathcal F$-free, then it must contain a forbidden $j'$-monochromatic odd cycle for some $j'\in[c]$.
    That is, for some vertices $w_1,\ldots,w_k\in W$, the edges $w_1w_2, w_2w_3, \ldots, w_kw_1$ have some $j'$-monochromatic paths attached to them.
    If some $w_s=v$, then both edges $w_{s-1}w_s$ and $w_sw_{s+1}$ have type $\mathbb E^i_\mathrm{half}$, and by applying \Cref{proposition:Ehalf} the length $k$ will be strictly greater than $\ell_{j'}$, so such a cycle cannot be forbidden.
    As no $w_s=v$, every $\mathbb E^j_X$-edge has a $j'$-monochromatic path of odd length smaller than $\ell_{j'}/3$ attached to it, so it suffices to take any three edges $w_1w_2, w_2w_3, w_3w_1$ and they induce a $j'$-monochromatic cycle whose length is smaller than $\ell_{j'}$.
    Now take a copy of $\mathbb K_3$ and assign two edges to the type $\mathbb E^i_\mathrm{half}$ and the remaining one to $\mathbb E^j_X$.
    The $\rho^{-1}$-image of this is $\mathcal{F}$-free because every odd cycle that maps into it has to intersect both copies of $\mathbb E^i_\mathrm{half}$ in its image and we know that this is not possible, see \Cref{proposition:Ehalf}. 
    But $\hat r$ sends it to $\mathbb K_3$ made of  $\mathbb E^j_X$, which is not $\family{F}$-free hence contradicts $r$ being a recoloring.
\end{proof}

\begin{lemma}\label{lemma:evenhalf}
    For every $i\in[c]$ there is $X\subseteq\rho$ such that $r(\mathbb E^i_\mathrm{half}) = \mathbb E^i_X$.
\end{lemma}
\begin{proof}
    The proof goes by induction on the reverse linear order $\leq$ defined on the set of monochromatic obstructions $\{\structure F_i \mid i \in [c]\}$, as $\structure F_j\leq \structure F_i \Leftrightarrow \structure F_i \to \structure F_j$. So it is the opposite of the homomorphism preorder. 
    The ``base'' case follows from \Cref{prop:base_half}.
    The inductive case is shown by contradiction, so suppose $r(\mathbb E^i_\mathrm{half}) = \mathbb E^j_X$, for $j\not=i$.
    
    \textbf{Case $\mathbb F_j = \mathbb C_{\ell_j}$ -- odd cycle.}
    By \Cref{prop:base_half}, we have $\structure F_i \to \structure F_j \land \structure F_j \not\to \structure F_i$ which implies that $\structure F_i = \structure C_{\ell_i}$ and $\ell_i > \ell_j$.
    We define $\mathbb W$ in the signature $\sigma \cup \rho$ as a copy of $\mathbb C_{\ell_j}$ on which we assign $\mathbb E^i_\mathrm{half}$ on two edges and $\mathbb E^j_\mathrm{half}$ on the remaining ones.
    Note that, $\rho^{-1}(\mathbb W)$ is $\family F$-free, since no monochromatic cycle of any color can map to it, thanks to \Cref{proposition:Ehalf}.
    However, the $\sigma$-reduct of $\rho^{-1}(\hat{r}(\structure W))$ is $(\mathbb C_{\ell_j},j)$ which is not $\family F$-free, thus we find the desired contradiction.
    
    \textbf{Case $\mathbb F_j = \mathbb K_{\ell_j}$ -- clique.}
    Take a copy of $\mathbb K_{\ell_j}$, choose a vertex $v\in K_{\ell_j}$ and assign $\mathbb E^i_\mathrm{half}$ to all the edges incident to $v$ and $\mathbb E^j_\mathrm{half}$ to the rest, let $\mathbb W$ be the result.
    Note that the $\sigma$-reduct of $\rho^{-1}(\hat{r}(\mathbb W))$ is $(\mathbb K_{\ell_j},j)$ which is not $\family F$-free. So if we show that $\rho^{-1}(\mathbb W)$ is $\family F$-free we would get that $r$ cannot be a recoloring.
    Note that the $\sigma$-reduct of $\mathbb W$ is non-monochromatic, therefore $\mathcal F$-free.
    So it only remains to consider monochromatic odd cycles as possible obstructions. But the image of a hypothetical monochromatic odd cycle would have to intersect either two structures $\mathbb E^i_\mathrm{half}$ or two structure $\mathbb E^j_\mathrm{half}$. 
    Anyhow, thanks to \Cref{proposition:Ehalf}, none of those can happen. So, $\rho^{-1}(\mathbb W)$ is $\family F$-free and the proof is concluded.
\end{proof}

\begin{lemma}\label{lem:recoloring_bijective}
Every recoloring from $\mathcal C^<$ to $\mathcal C^<$ is bijective.
\end{lemma}
\begin{proof}
    Let $\mathbb E^i_X$ be a maximal $(\sigma\cup\rho)$-edge and suppose first that $r(\mathbb E^i_X) = \mathbb E^j_Y$ for some $j\not=i$.
    
    \textbf{Case $\mathbb F_j = \mathbb C_{\ell_j}$ -- an odd cycle.}
    Take a copy of $\mathbb C_{\ell_j}$ and assign two of its edges to the type $\mathbb E^j_\mathrm{half}$ and the remaining edges to the type $\mathbb E^i_X$ and call the result $\structure W$.
    Note that $\rho^{-1}(\structure W)$ is $\mathcal F$-free because all monochromatic paths are too long (as there are two $\mathbb E^j_\mathrm{half}$-edges) but the $\sigma$-reduct of its $\hat r$-image is the $j$-monochromatic $\mathbb C_{\ell_j}$, by \Cref{lemma:evenhalf}.
   
    \textbf{Case $\mathbb F_j = \mathbb K_{\ell_j}$ -- a clique.}
    Take a copy of $\mathbb K_{\ell_j}$ and assign one of its edges to $\mathbb E^i_X$ and the rest to $\mathbb E^j_\mathrm{half}$ and call the result $\mathbb W$.
    Suppose that $\rho^{-1}(\structure W)$ is not $\mathcal F$-free, which means that some forbidden $j'$-monochromatic odd cycle $(\mathbb C_{\ell_{j'}}, j')$ maps homomorphically to it.
    Its image must contain at least three different edges of $\mathbb K_{\ell_j}$ and thus at least two of them must be of the type $\mathbb E^j_\mathrm{half}$, therefore the $p$-monochromatic cycle has length $>\ell_{j'}$, a contradiction.
    This implies that $\rho^{-1}(\structure W)$ is $\mathcal F$-free but the $\sigma$-reduct of $\hat r(\mathbb W)$ is $(\mathbb K_{\ell_j}, j)\in\mathcal F$.

    Finally, suppose that $r(\mathbb E^i_X) = \mathbb E^i_Y$, for some $Y\not=X$.
    The fact that $X\not=Y$ means, by maximality, that, for some $j\not=i$ and for some $k < \ell_j$, we can attach $(\mathbb P_k, j)$ to an $\mathbb E^i_X$-edge and that we can attach $(\mathbb P_{\ell_j-k}, j)$ to $\mathbb E^i_Y$.
    Take a copy of $\mathbb C_{k+1}$ and assign $\mathbb E^i_X$ to one of its edges and $\mathbb E^j_\mathrm{half}$ to the rest, call the result $\mathbb W$.
    Note that $\rho^{-1}(\structure W)$ is $\mathcal F$-free because all monochromatic paths are too long (by \Cref{proposition:Ehalf}, as there are at least two $\mathbb E^j_\mathrm{half}$-edges) and because we can attach $(\mathbb P_k,j)$ to an $\mathbb E^i_X$-edge.
    However, by \Cref{lemma:evenhalf}, $\rho^{-1}(\hat r(\mathbb W)$ contains a copy of $(\mathbb C_{\ell_j},j)$ obtained by joining $(\mathbb P_k, j)$ and $(\mathbb P_{\ell_j-k}, j)$ by their endpoints.
\end{proof}

Let $\mathscr G$ and $\mathscr H$ be two groups acting with permutations on sets $A$ and $B$, respectively.
Call a homomorphism $h\colon \mathbb A\to \mathbb B$ \emph{canonical with respect to $\mathscr G$ and $\mathscr H$} if for every $n>0$, $\bar a\in A^n$, and $\alpha \in \mathscr G$, there is $\beta\in\mathscr H$ such that $h(\bar a) = \beta(h(\alpha(\bar a)))$.
That is, a canonical mapping induces a well-defined mapping from the orbits of $\mathscr G\curvearrowright A^n$ to the orbits of $\mathscr H\curvearrowright B^n$, for every $n>0$.
The only reason why the Ramsey property is necessary in this paper is that it ensures the existence of a \emph{canonical} homomorphism by the following lemma.

\begin{lemma}[cf.~\cite{bodirsky_pinsker_tsankov}, Th.\ 5 in~\cite{bodirsky_pinsker_ramsey_canonical}]\label{lem:canonization}
    Let $\mathbb A$ and $\mathbb B$ be countable $\omega$-categorical structures such that $\mathbb A$ is Ramsey, and let $\mathbb A',\mathbb B'$ be reducts of these structures.
    If there exists a homomorphism $h'\colon \mathbb A' \to \mathbb B'$, then there is also a homomorphism 
    \[
    h \in \overline{\{\beta \circ h' \circ \alpha \mid \alpha \in \Aut(\mathbb A),\; \beta\in\Aut(\mathbb B)\}}
    \]
    which is canonical with respect to $\Aut(\mathbb A)$ and $\Aut(\mathbb B)$.
\end{lemma}
Here, for a set of functions from $X$ to $Y$, we denote by $\overline{C}$ the set of all $g\colon X\to Y$ such that for every finite $S\subseteq X$ there is $f\in C$ satisfying $f\restriction_S = g\restriction_S$.

\begin{lemma}\label{lem:gfmcc}
    Let $\mathbb G_\mathcal F$ be the graph-reduct of $\mathbb C$.
    Then, the structure $(\mathbb G_\mathcal F, \not=)$ is a model-complete core.
\end{lemma}
\begin{proof}
Let $(\mathbb G,\neq)$ be the model-complete core of $(\mathbb G_\mathcal F, \not=)$, seen as a substructure of $(\mathbb G_{\mathcal F},\neq)$ and let $g\in\mathrm{End}(\mathbb G_\mathcal F, \not=)$ be a homomorphism $(\structure G_{\mathcal F},\neq)\to(\mathbb G,\neq)$.
As $(\mathbb G_\mathcal F, \not=)$ has a Ramsey expansion $(\mathbb C, <)$, we can use the \emph{canonisation lemma} (\Cref{lem:canonization}) and thus assume that $g$ is canonical with respect to $\Aut(\mathbb C, <)$.
By \Cref{lem:recoloring_bijective}, the action of $g$ on the orbits of edges of $(\mathbb C, <)$ is bijective, and thus $g^n\in\mathrm{End}(\mathbb C, <)$ for some $n>0$.
As $(\mathbb C, <)$ is a model-complete core, $g^n$ preserves all first-order formulas over $\mathbb C$.
Hence, $g$ and $g^{n-1}$ \emph{locally invert each other} in the sense of~\cite{Bodirsky_book}, and $g\in\overline{\Aut(\structure G_\family{F})}$ by Cor.~3.4.13 in~\cite{Bodirsky_book}, so it preserves all first-order formulas in $\structure G_\family{F}$.
This shows in particular that $(\mathbb G,\neq)$ and $(\mathbb G_\mathcal F, \not=)$ have the same first-order theory and are isomorphic by $\omega$-categoricity.
\end{proof}

Say that a set $S\subseteq (G_\mathcal F)^n$ is \emph{pp-definable} in $(\mathbb G_\mathcal F,\not=)$ if there is a finite graph $\mathbb D_S$ and a tuple $\bar d\in (D_S)^n$ such that, for every $\bar x\in (G_\mathcal F)^n$, there is an injective homomorphism $h\colon\mathbb D_S \to \mathbb G_\mathcal F$ such that $h(\bar d)=\bar x$ if and only if  $\bar x\in S$.

\begin{corollary}[Th. 4.5.1 in~\cite{Bodirsky_book}]\label{cor:mcc_pp}
    For every $n>0$ and every $\bar d \in (G_\mathcal F)^n$, the $\Aut(\mathbb G_\mathcal F, \not=)$-orbit of $\bar d$ is pp-definable in $(\mathbb G_\mathcal F, \not=)$.
\end{corollary}

\subsection{Reduction from Ext to Col}\label{section:coldet}

Let $\structure G$ be a graph, and $\xi^\mathbb G$ be a partial coloring of the edges of $\structure{G}$ in $c$ colors and $e\in E^\structure G \setminus\dom{\xi^{\structure{G}}}$ be an uncolored edge.
For $i\in [c]$, we call $(\structure{G}, \xi^{\structure{G}}, e)$ \emph{a color-$i$-determiner} if it is $\family{F}$-safe and
\begin{enumerate}[(1)]
    \item one of the two vertices incident to $e$ is not incident to any edge of $\dom{\xi^{\structure{G}}}$,
    \item there is an $\family{F}$-free extension of $\xi^{\structure{G}}$, and if $\gamma$ is an $\family{F}$-free extension of $\xi^{\structure{G}}$, then $\gamma(e)=i$.
\end{enumerate}
For $d\in\mathbb{N}$, a color-determiner $(\structure{G},\xi^{\structure{G}},e)$ is called \emph{$d$-remote} if $\dist(e, \dom{\xi^{\structure{G}}})\geq d$.
Now we start the technical part by proving the existence of a graph that has a valid coloring and a vertex for which the edges incident to it have a special pattern for all valid colorings.
\begin{lemma}\label{lemma:step1}
   There exists a graph $\structure H$ with the following properties:
\begin{enumerate}[(1)]
    \item $\structure H$ is $\family{F}$-free colorable;
    \item \label{item:step1} there is a vertex $x \in H$ such that if $\gamma$ is a $\family{F}$-free coloring of $\structure H$, then for each color $i$ there exists an edge $e$ incident to $x$ such that $\gamma(e)=i$.
\end{enumerate}
\end{lemma}
\begin{proof}
    We first show the existence of a graph $\structure G$ that is $\structure S$-free and not $\family F$-free colorable.
    Let $\structure M$ be the $\rightarrow$-maximum of the monochromatic part of $\family F$, which is necessarily a clique or a cycle of odd length $d\geq 5$. 
    Observe that for each color $i$, the colored graph $(\structure M, i^{\structure M})$ is in $\forb{\family F}$, where $i^{\structure M}$ is the constant coloring whose image is $i$.
    
    Suppose first that $\structure M$ is a clique $\structure K_m$.
    By Theorem 1 in~\cite{NESETRIL1976243}, there exists a $\structure K_{m+1}$-free graph $\structure G$ such that every edge-coloring of $\structure G$ contains a monochromatic $\structure K_{m}$ as a subgraph, and in particular $\structure G$ is not $\family F$-free colorable.
    Note that by the assumption on $\family F$ such a $\structure G$ has to be also $\structure S$-free, because $\structure K_{m+1}$ is an induced subgraph of $\structure S$.
    If $\structure M$ is a cycle $\structure C_d$ for some odd $d\geq 5$, then by Theorem 1.1 in \cite{HAN2015457} there exists a graph $\structure G$ with girth $d$
    and such that every edge-coloring of $\structure G$ contains a monochromatic copy of $\structure C_d$.
    Note that $\structure G$ is $\structure S$-free, since the girth of $\structure S$ is at most $d-2$.
    
    Removing edges one at a time, we reach a graph $\structure G'$ that is minimal with the property of not being $\family F$-free colorable; i.e., for every edge $e \in E^{\structure G'}$, we have that $\structure G'\setminus e$ is $\family{F}$-free colorable.
    Let $e$ be an arbitrary edge of $\structure G'$.
    We show that $\structure H:=\structure G'\setminus e$ satisfies the property of~\Cref{lemma:step1}.
    
    The first property is immediate. 
    We prove the second property by contradiction.
    Let $x$ be one of the two vertices of $e$. 
    Let $\gamma^\structure H$ be a $\family{F}$-free coloring of $\structure H$. 
    Suppose that there exists $i \in [c]$ such that every edge $f$ incident to $x$ satisfies $\gamma^\structure H (f) \neq i$.
    Define a coloring $\gamma^{\structure G'}$ of the edges of $\structure G'$ by extending it by $\gamma^{\structure G'}(e)=i$,
    so that $(\structure H,\gamma^\structure H)$ is a colored subgraph of $(\structure G',\gamma^{\structure G'})$.
    Let us assume by contradiction that there exist $(\structure F,\chi^{\structure F}) \in \family F$ and an injective homomorphism $h\colon (\structure F,\chi^{\structure F}) \rightarrow (\structure G',\gamma^{\structure G'})$ (recall our standing assumption made at the beginning of~\Cref{section:main_result}, allowing us to assume that this homomorphism is injective).
    Moreover, there exists $(u,v)\in E^\structure F$ such that $(h(u),h(v))=e$,
    otherwise $(\structure H,\gamma^\structure H)$ is not $\family F$-free. Say $h(v)=x$.
    Since $\gamma^{\structure G'}(e)=i$, we obtain that $(\structure F,\chi^{\structure F})$ is monochromatic of color $i$.
    Let $w\neq u$ be a vertex adjacent to $v$ in $\structure F$, we know that it exists because $\structure F$ is an odd cycle.
    Since every $f\neq e$ incident to $x$ satisfies $\gamma^{\structure G'}(f)\neq i$, we must have $h(w)=h(u)$, which contradicts the fact that $h$ is injective.
\end{proof}

From now until Lemma~\ref{lemma:step2c} we fix a color $i\in [c]$.
By \Cref{lemma:step1} (item~\ref{item:step1}), for every $\mathcal F$-free coloring there is some edge incident to $x$ which is colored with $i$.
The missing step to obtain a color-$i$-determiner is to force a \emph{fixed} edge to be colored with $i$ in all $\family F$-free colorings. We achieve this in several steps in the following lemmas.

\begin{lemma}\label{lemma:step2a}
    There exist a partially colored graph $(\structure I ,\xi^{\structure I})$ and $f^{\structure I} \in E^{\structure I} \setminus \dom{\xi^{\structure I}}$ such that:
    \begin{enumerate}[(1)]
        \item There exists an $\family F$-free coloring $\gamma$ of $\structure I$ that extends $\xi^{\structure I}$.
        \item For every $\family F$-free extension $\gamma$ of $\xi^{\structure I}$, we have $\gamma(f^{\structure I})=i$.
        \item There exists $x\in I$ incident to all edges in $\dom{\xi^{\structure I}}$ and $f^{\structure I}$.
    \end{enumerate}
\end{lemma}
\begin{proof}
    Take $\structure I$ to be the graph given by~\Cref{lemma:step1} and iteratively color the edges incident to $x$ by colors different from $i$ while there exists an $\family F$-free extension of the coloring.
    The graph $\structure I$ is depicted in~\Cref{fig:ci}.
    By item 2 in~\Cref{lemma:step1}, it is not possible to color all the edges incident to $x$ this way and therefore at least one edge $f^{\structure I}$ remains uncolored.
    By construction, any coloring extending this partial coloring of $\structure I$ must color $f^{\structure I}$ with the color $i$, and moreover only edges incident to $x$ are in the domain of $\xi^{\structure I}$.
\end{proof}

\begin{figure}[h]
        \centering
        \includegraphics[scale=1]{./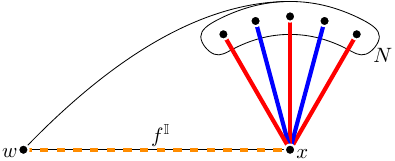}
        \caption{The graph $\structure I$ as in the proof of~\Cref{lemma:step2a}. The bottom edge must be colored with $i$ (taken here to be orange) in every $\family F$-free extension of this partial edge-coloring of $\structure I$.}
        \label{fig:ci}
\end{figure}

If $\mathcal F$ consisted of edge-colored cliques, then $(\structure I ,\xi^{\structure I}, f^{\structure I})$ would already be a color-$i$-determiner.
However, if $\mathcal F$ contains monochromatic cycles, then it is possible that $(\structure I ,\xi^{\structure I}, f^{\structure I})$ is not $\mathcal F$-safe. 
We ensure this property in the following steps and show that color-$i$-determiners exist for every $\mathcal F$ in our scope.
\begin{lemma}\label{propositio:Dnegi}
    There exist a partially colored graph $(\structure{T} ,\xi^{\structure{T}})$ and  $f^{\structure{T}}\in E^\structure{T}\setminus\dom{\xi^{\structure{T}}}$ such that:
    \begin{enumerate}[(1)]
        \item\label{item:Dnegi3} For every $\family F$-free extension $\gamma$ of $\xi^{\structure{T}}$, we have $\gamma(f^{\structure{T}})\neq i$. 
        \item\label{item:Dneg-soundness} For every $j\in [c]$ with $j\neq i$, there is an $\family F$-free coloring $\gamma$ of $\xi^{\structure T}$such that $\gamma (f^{\structure{T}}) = j$.
        \item \label{item:Dnegi4} $(\structure{T}, \xi^{\structure{T}}, f^{\structure{T}})$ is $\family F$-safe.
        \item\label{item:Dnegi-distance} One of the vertices of $f^{\structure{T}}$ is not incident to any edge in $\dom{\xi^{\structure{T}}}$.
    \end{enumerate}
\end{lemma}
\begin{proof}
    Take $(\mathbb F_i, i)\in \mathcal F$ as defined in Section \ref{sec:prelims} and $yz\in E^{\structure F_i}$.
    Let $(\structure I, \xi^\structure I, f^\structure I)$ be the gadget provided by \Cref{lemma:step2a} for color $i\in[c]$.
    First, we remove the color from all edges incident to $y$ in $\structure F_i$.
    Then attach to it a copy of $(\structure I, \xi^\structure I, f^\structure I)$ along $f^\structure I$ to all edges incident to $y$ except $yz$ in such a way that no colored edge is incident to $y$.
    See \Cref{fig:Dnegi} for the cases $\mathbb F_i\cong \mathbb C_5$ and $\mathbb F_i\cong \mathbb K_4$.
    Call the resulting partially colored graph $(\mathbb T,\xi^\mathbb T)$.
    From now on we refer to the copy of $\structure F_i$ in $\structure{T}$ as the \emph{ground graph}.
    \begin{figure}[h]
        \begin{center}
            \includegraphics[scale=0.6]{./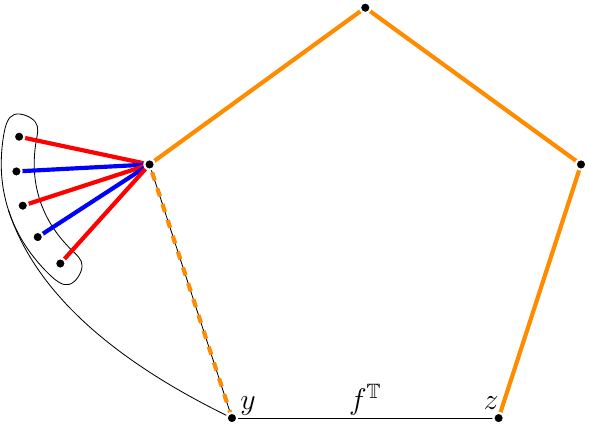}
            \hspace{1cm}
            \includegraphics[scale=0.6]{./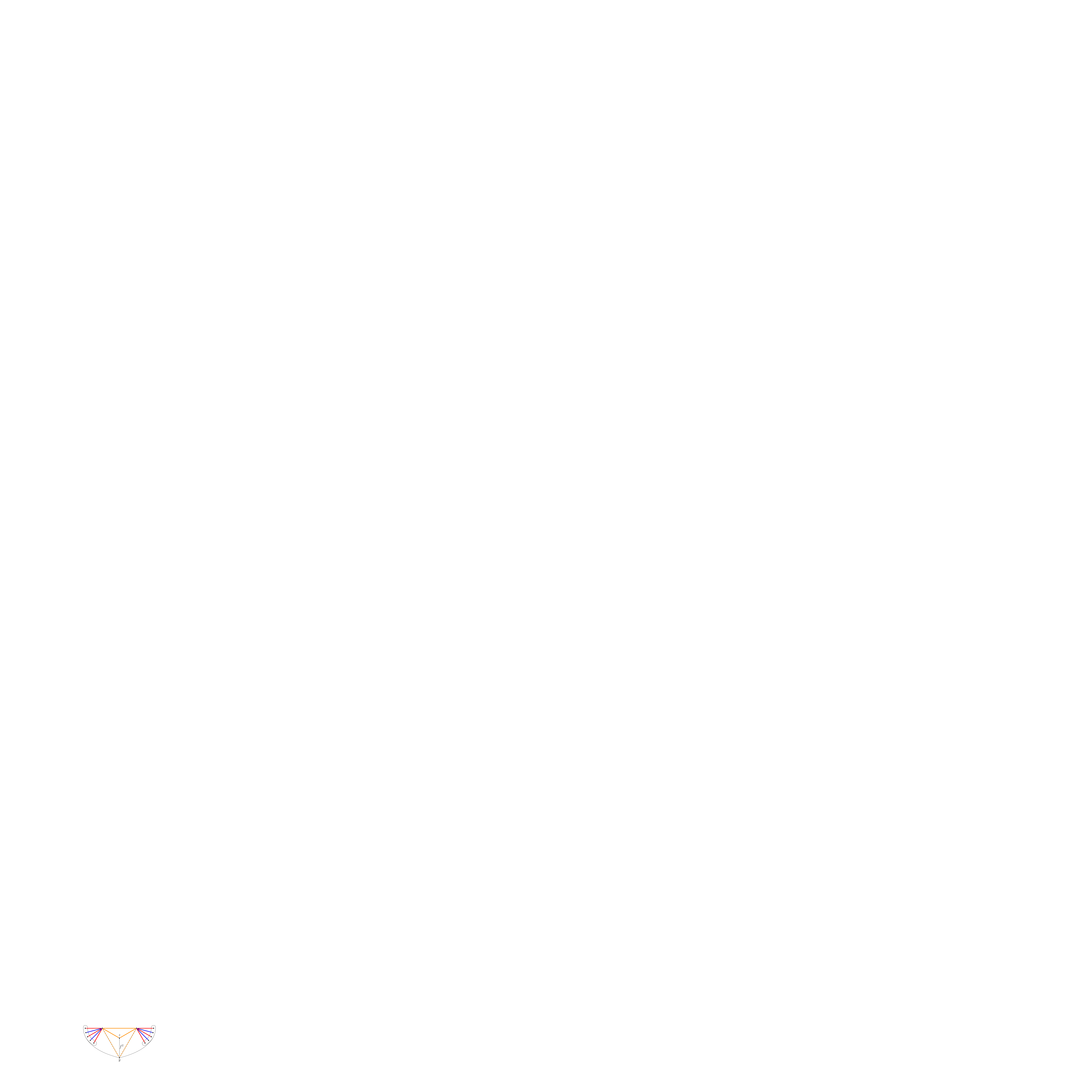}
            \caption{The construction of $\structure{T}$ in the proof of~\Cref{propositio:Dnegi}.}
            \label{fig:Dnegi}
        \end{center}
    \end{figure}
 
    Finally, we prove the desired properties.
    \Cref{item:Dnegi-distance} follows from construction, $y$ is the witness to this property.
    \Cref{item:Dnegi3} follows from the fact that all edges except $f^{\structure{T}}$ in the copy of the ground graph in $(\structure{T},\xi^{\structure{T}})$ must be colored by $i$ in every $\family F$-free extension of $\xi^{\structure{T}}$.
    Thus, any extension must color $f^{\structure{T}}$ with a color different from $i$, otherwise the ground graph is monochromatic in color $i$. 
    Item \ref{item:Dneg-soundness} follows because we only deal with monochromatic obstructions.
    \Cref{item:Dnegi4} is exactly the last item of \Cref{easyobs}. 
\end{proof}

\begin{lemma}\label{lemma:step2c}
    For every color $i\in [c]$, there exists a color-$i$-determiner.
\end{lemma}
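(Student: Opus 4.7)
The plan is to construct an $\family F$-safe colored $i$-determiner by amalgamating, for each color $j \in [r]\setminus\{i\}$, a copy of the gadget $(\structure D^{\neg j}, \xi^{\neg j})$ produced by \Cref{propositio:Dnegi}. Enumerate $[r]\setminus\{i\}=\{j_1,\dots,j_{r-1}\}$ and iteratively form
\[
\structure T^i := \structure D^{\neg j_1} \oplus \cdots \oplus \structure D^{\neg j_{r-1}},
\]
where each $\oplus$ glues the current distinguished edge to $f^{\neg j_k}$. Using the freedom allowed by the notation $\oplus$ in identifying undirected edges, I would align the gluings so that the free endpoints guaranteed by item~(3) of \Cref{propositio:Dnegi} for each $\structure D^{\neg j_k}$ are all identified into a single vertex $y^*$; the resulting common edge $e^*$ then has $y^*$ as an endpoint incident to no edge of the combined partial coloring $\xi^i_*$ of $\structure T^i$.

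Verifying the three requirements of a colored $i$-determiner for $(\structure T^i,\xi^i_*,e^*)$ is then straightforward. The free-endpoint condition holds by construction. In any $\family F$-free extension of $\xi^i_*$, each embedded copy of $\structure D^{\neg j_k}$ forbids color $j_k$ on $e^*$ by item~(1) of \Cref{propositio:Dnegi}, so $e^*$ must receive the only remaining color, namely $i$. The existence of an $\family F$-free extension of $\xi^i_*$, and the $\family F$-safety of $(\structure T^i,\xi^i_*)$ along $e^*$, will be established simultaneously by induction on the number of amalgamated gadgets, applying \Cref{easyobs}(1) at each step.

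The delicate point is supplying, at each inductive step, a common color witnessing compatibility between the partial amalgam and the new $\structure D^{\neg j_{k+1}}$ being attached, as required in order to apply the safety of the previous stage. This will rely on the following fact, implicit in the proof of item~(2) of \Cref{propositio:Dnegi}: for every $p\in[r]\setminus\{j\}$, the graph $\structure D^{\neg j}$ admits an $\family F$-free extension of $\xi^{\neg j}$ coloring $f^{\neg j}$ with $p$. Indeed, the explicit extension constructed in that proof assigns to the distinguished edge exactly the non-$j$ color prescribed by the external side of the amalgam. Choosing $p=i$ at every step (which is permitted since $i\neq j_\ell$ for all $\ell$) yields mutually compatible $\family F$-free extensions between the current amalgam and the new gadget, and \Cref{easyobs}(1) then closes the induction, delivering the desired $\family F$-safe colored $i$-determiner.
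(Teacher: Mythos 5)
Your proposal is correct and follows essentially the same construction as the paper: amalgamate all the gadgets $(\structure D^{\neg j},\xi^{\neg j})$ for $j\neq i$ over their distinguished edges $f^{\neg j}$, aligning the free endpoints from \Cref{propositio:Dnegi}(3) into a single uncolored vertex, and invoke \Cref{easyobs}(1) to carry $\family F$-safety through the amalgamation. You additionally make explicit the compatibility of extensions (each $\structure D^{\neg j}$ admits an $\family F$-free extension coloring $f^{\neg j}$ by $i$), a point the paper leaves implicit when it simply cites \Cref{easyobs}.
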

\begin{proof}
    For $j\neq i$, let $(\structure{T}_{j},\xi^{\structure{T}_{j}},f^{\structure{T}_{j}})$ be provided by~\Cref{propositio:Dnegi} for the color $j$. 
    Let $x_j$ be the vertex of $f^{\structure{T}_{j}}$ which is not incident to any edge in $\dom{\xi^{\structure{T}_{j}}}$.
    Take the disjoint union of the colored graphs $(\structure{T}_{j} ,\xi^{\structure{T}_{j}})$ for all $i\neq j$ and identify all the edges $f^{\structure{T}_{j}}$, making sure to identify all the vertices $x_j$ together. We call this graph $\structure D_{i}$ and the edge resulting from the amalgamation is called $f^{\structure D_{i}}$.
    Clearly, every $\family F$-free extension of the coloring of $\structure D_{i}$ must color $f^{\structure D_{i}}$ with $i$, and the vertex resulting from the merging of all vertices $x_j$ does not belong to any colored edge.
    Since $\structure D_{i}$ results from the amalgamation  of $\family F$-safe structures, it is itself $\family F$-safe by~\Cref{easyobs}.
\end{proof}

\begin{restatable}{lemma}{lemmaremotedeterminers}\label{lemma:remote_determiners}
    For every $d\in\mathbb{N}$ and $i\in [c]$, there exists a $d$-remote color-$i$-determiner.
\end{restatable}
\begin{proof}
    The proof is by induction on $d$.
    The base $d=0$ is provided by \Cref{lemma:step2c}.
    For the inductive step, for every $i\in [c]$, we glue to every $i$-colored edge $f$ of a $d$-remote color-determiner a copy of a color-$i$-determiner and remove the color from $f$. 
    This remains $\family F$-safe by~\Cref{easyobs}, and is $(d+1)$-remote.
\end{proof}

\begin{restatable}{proposition}{thmprecolored}\label{lemma:precolored}
    The problem $\Ext(\family{F})$ reduces in poly-time to $\Col(\family{F})$.
\end{restatable}
\begin{proof}
    Let $d$ be the maximum number of vertices in a graph in $\family{F}$.
    Let $(\structure X,\xi^\mathbb X)$ be a partially colored graph.
    Let $\mathbf T_i:=(\structure T_i, \xi^{\mathbb T_i}, f^{\mathbb T_i})$ be a $d$-remote color-$i$-determiner provided by \Cref{lemma:remote_determiners}.
    For every $i\in[c]$ and for every edge $e\in\dom{\xi^\mathbb X}$ colored with $i$, we glue a copy of $\mathbf T_i$ by identifying $e$ with $f^{\mathbb T_i}$, then make $e$ uncolored.
    For every $i\in[c]$ and for every two copies $(\structure T_i, \xi^{\mathbb T_i}, f^{\mathbb T_i})$ and $(\structure T_i', \xi^{\mathbb T_i'}, f^{\mathbb T_i'})$ of $\mathbf T_i$, we identify every edge of $\dom{\xi^{\mathbb T_i}}\subset E^{\mathbb T_i}$ with its copy in $\dom{\xi^{\mathbb T_i'}}\subset E^{\mathbb T_i'}$.
    This makes $\xi^{\mathbb T_i}$ and $\xi^{\mathbb T_i'}$ the same mapping $\xi_i$.

    Denote the resulting structure by $(\structure X_1, \xi^{\mathbb X_1})$, where $\xi^{\mathbb X_1} = \bigoplus_i \xi_i$.
    Let $\bar c \in (X_1)^\ell$ be the tuple of all vertices that are incident to edges in $\dom{\xi^{\mathbb X_1}}$.
    As every color-$i$-determiner $\mathbf T_i$ has a $\mathcal F$-free extension, there is a homomorphism $g_i\colon \mathbb T_i \to \mathbb G_\mathcal F$ such that $\chi^{\mathbb C}(g_i(e))=\xi^{\mathbb T_i}(e)$ for every $e\in\dom{\xi^{\mathbb T_i}}$, where $\mathbb C = (\mathbb G_\mathcal F,\chi^\mathbb C)$ is the $\sigma$-expansion of $\mathbb G_\mathcal F$.
    Let $g\colon \bigoplus_i\mathbb T_i \to \mathbb G_\mathcal F$ be such $g(x)=g_i(x)$ whenever $x\in T_i$, for every $i\in[c]$.
    Let $(\mathbb D, \bar d)$ be the pp-definition of the $\Aut(\mathbb G_\mathcal F)$-orbit of $g(\bar c)$ provided by \Cref{cor:mcc_pp}, and let $h\colon \mathbb D\to \mathbb G_\mathcal F$ be such that $g(\bar c)=h(\bar d)$.
    Glue  $\structure D$ to $\structure X_1$ by identifying the vertices of  $\bar d$ and $\bar c$ component-wise, remove  $\xi^{\mathbb X_1}$, and call the resulting graph $\structure Y$, see \Cref{fig:reduction}.

    Let $(\mathbb X,\chi^\mathbb X)$ be a $\mathcal F$-free extension of $\xi^\mathbb X$ and $\chi^{\mathbb Y}\colon E^{\mathbb Y}\to [c]$ be equal to $\chi^\mathbb X$ when restricted on $E^\mathbb X$ and the edge-colors of the rest are provided by the images of $g$ and $h$.
    The fact that $(\mathbb Y, \chi^{\mathbb Y})$ is $\mathcal F$-free follows from the $\mathcal F$-safeness of color-$i$-determiners.

    Let $\chi^{\mathbb Y}$ be an $\mathcal F$-free edge-coloring of $\mathbb Y$, then there is a homomorphism $s\colon \mathbb Y \to \mathbb G_\mathcal F$.
    Let $\alpha \in\Aut{\mathbb G_\mathcal F}$ such that the edges of $\alpha(s(\bar c))$ have the same colors as in $\xi^{\mathbb X_1}$.
    Then the edge-coloring of $\mathbb Y$ provided by $\alpha\circ s$ will extend $\xi^\mathbb X$.
\end{proof}
    
\begin{figure}
    \centering
    \includegraphics[width=\linewidth]{./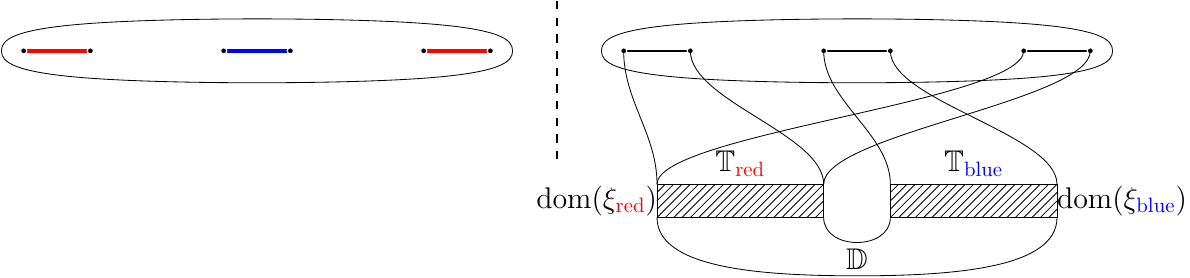}
    \caption{Reduction from \Cref{lemma:precolored}: structure $\mathbb X$ is on the left and $\mathbb Y$ is on the right.}
    \label{fig:reduction}
\end{figure}

\subsection{Reduction from an NP-hard finite-domain CSP to Ext}

\begin{definition}\label{equality-gadget}
    A partially colored graph $(\structure H^{=}, \xi^=)$ with distinguished edges $e^=, f^=\in E^{\mathbb H^=}\setminus \dom{\xi^=}$ is a \emph{color-equality gadget} for $\family F$ if it satisfies the following:
    \begin{enumerate}[(1)]
        \item\label{item:3eq} every $\family F$-free extension of $\xi^=$ assigns to $e^=$ and $f^=$ the same color,
        \item\label{item:2eq} for every $i\in [c]$, there exists an $\family F$-free extension $\xi$ of $\xi^=$ such that $\xi(e^=)=\xi(f^=)=i$.
        \item\label{item:4eq} $(\structure H^=, \xi^=, e^=, f^=)$ is $\family F$-safe.
    \end{enumerate}
\end{definition}

Recall from Section \ref{sec:prelims} that we indicate $\structure F_p$ the $\rightarrow$-minimum graph such that $(\structure F_p, p) \in \family{F}$, where, abusing notation, $p$ indicates the constant coloring whose image is $\{p\}$.
From now until Lemma~\ref{lemma:eqund} we fix $i,j\in [c]$.
\begin{lemma}\label{lemma:step1eq}
    There is a partially colored graph $(\structure H, \xi^{\structure H})$ and $f^{\structure H}\in E^\mathbb H\setminus \dom{\xi^\mathbb H}$ such that:
    \begin{enumerate}[(1)]
        \item every $\mathcal F$-free extension of $\xi^{\structure H}$ assigns to $f^{\structure H}$ either $i$ or $j$,
        \item for every $p\in \{i,j\}$ there exists an $\mathcal F$-free extension of $\xi^{\structure H}$ that assigns to $f^{\structure H}$ the color $p$,
        \item $(\structure H, \xi^{\structure H}, f^{\structure H})$  is $\family F$-safe.
    \end{enumerate}
\end{lemma}
\begin{proof}
    Let us color all the edges of $\structure F_p$ in $p$ except for $e\in E^{\mathbb F_p}$ and denote the result $(\structure F_p, p\setminus e)$.
    $(\structure F_p, p\setminus e, e)$ is $\family F$-safe by~\Cref{easyobs}.
    Let $\mathbb H$ be obtained from the disjoint union of all $(\structure F_p, p\setminus e, e)$, for $p\not\in\{i,j\}$, by identifying the uncolored edges into one called $f^{\structure H}$.
    Since each $(\structure F_p, p\setminus e, e)$ is $\family F$-safe, $(\structure H, \xi^{\structure H}, f^{\structure H})$ is $\mathcal F$-safe as well, by~\Cref{easyobs}.
    Moreover, $f^{\structure H}$ cannot take any color other than $i$ or $j$ in any $\family F$-free extension of the coloring $\xi^\mathbb H$.
    Finally, assigning $i$ or $j$ to $f^{\structure H}$ does not create an obstruction.
\end{proof}

\begin{lemma}\label{lemma:iarrowj}
    There is a partially colored graph $(\structure U, \xi^{\structure U})$ and $e_{i}, e_{j}\in E^\mathbb U\setminus \dom{\xi^{\structure U}}$ such that:
    \begin{enumerate}[(1)]
        \item\label{itm:ij} for every $\mathcal F$-free extension $\gamma$ of $\xi^{\structure U}$, $\gamma(e_i)= i$ implies $\gamma(e_j) = j$,
        \item\label{itm:ij2} for each $p\in [c]\setminus\{i\}$ and $q\in\{i,j\}$ there exists an $\mathcal F$-free  extension $\gamma$ of $\xi^{\structure U}$ with $\gamma(e_i)=p$ and $\gamma(e_j)=q$,
        \item both $(\structure U, \xi^{\structure U}, e_{i})$ and $(\structure U, \xi^{\structure U}, e_{j})$ are $\family F$-safe.
    \end{enumerate}
\end{lemma}
\begin{proof}
    We color all except two edges of $\structure F_i$, we call the two uncolored edges $e_i$ and $e_j$.
    If possible, take them to be at a distance larger than $0$. 
    Notice that this is always possible, unless $\structure F_i = \structure K_3$.
    Take $(\structure H,\xi^{\structure H},f^{\structure H})$ provided by Lemma~\ref{lemma:step1eq} and identify $e_j$ with $f^{\structure H}$.
    We indicate the result as $(\structure U, \xi^{\structure U})$. 
    If $e_i$ is colored in $i$, then $e_j$ cannot be colored in $i$.
    Since $e_j$ can only be colored in $i$ or $j$ by the properties of $(\structure H,\xi^{\structure H}, f^{\structure H})$, we get that it can only be colored in $j$.
    This proves item \ref{itm:ij}.
    Item \ref{itm:ij2} follows immediately since we only have monochromatic obstructions.
    The last item follows from \Cref{easyobs}.
\end{proof}

Now,  for every 2-element subset of colors $\{i,j\}$, we build a graph with two distinguished edges such that we can assign to them every pair of colors but $i,j$.

\begin{lemma}\label{lemma:Gd}
    For every $d\geq 1$, there is a partially colored graph $(\structure G_d, \xi^{\structure G_d})$ and $e_{i}, e_{j}\in E^{\structure G_d}\setminus\dom{\xi^{\structure G_d}}$ such that:
    \begin{enumerate}[(1)]
        \item\label{itm:notij} for every $\mathcal F$-free extension $\gamma$ of $\xi^{\structure G_d}$, either $\gamma(e_i)\neq i$ or $\gamma(e_j)\neq j$,
        \item\label{itm:allnotij} for every pair of colors $(p,q)$ different from $(i,j)$, there exists an $\mathcal F$-free extension $\gamma$ of $\xi^{\structure G_d}$ with $\gamma(e_i)=p$ and $\gamma(e_j)=q$,
        \item both $(\structure G_d, \xi^{\structure G_d},e_{i})$ and $(\structure G_d, \xi^{\structure G_d},e_{j})$ are $\family F$-safe.
        \item $\dist(e_i,e_j)\geq d$.
    \end{enumerate}
    Moreover, if $d$ is greater than the size of every graph in $\mathcal F$, then
    \begin{enumerate}[(1)]
    \setcounter{enumi}{4}
    \item\label{item:last} $(\structure G_d, \xi^{\structure G_d}, e_i,e_j)$ is $\mathcal{F}$-safe.
    \end{enumerate}
\end{lemma}

\begin{proof}
    We first examine the case $d=1$. For the moment, we assume that $\structure F_j \neq \structure K_3$ or $\structure F_i \neq \structure K_3$. We consider without loss of generality that $\structure F_j \neq \structure K_3$.
    We define a graph $\structure G^{i,j}_1$ as follows.
    Take $(\structure U, \xi^{\structure U}, h_i, h_j)$ from Lemma~\ref{lemma:iarrowj} and $(\structure F_j, j\setminus e^{\structure F_j} f^{\structure F_j})$ where $e^{\structure F_j}, f^{\structure F_j}$ are not adjacent to each other. Notice that this is always possible since $\structure F_j \neq \structure K_3$.
    Now we define $(\structure G_1, \xi^{\structure G_1} , e_i, e_j)$ by taking two copies of partially colored graphs just mentioned and identifying $h_j$ with $f^{\structure F_j}$, and renaming $e_i$ the edge $h_i$ and $e_j$ the edge $e^{\structure F_j}$.
    
    We start proving the desired properties of the structure obtained.
    By definition of $(\structure F_j, j\setminus e^{\structure F_j} f^{\structure F_j})$ for every $\family F$-free valid extension $\gamma$ of $j\setminus e^{\structure F_j} f^{\structure F_j}$ we have $\gamma(e^{\structure F_j})\neq j$ or $\xi(f^{\structure F_j})\neq j$ otherwise $\structure F_j$ would be monochromatic in $j$.
    By the properties of $(\structure U, \xi^{\structure U}, h_i, h_j)$ we know that if $e_i$ takes color $i$ then $h_j \equiv f^{\structure F_j} $ must be colored in $j$ then by definition of $(\structure F_j, j\setminus e^{\structure F_j} f^{\structure F_j})$ we get that $e_j \equiv e^{\structure F_j}$ cannot be colored in $j$. This proves Item \ref{itm:notij}.
    For Item \ref{itm:allnotij} it suffices to see that if $e_i$ is not colored in $i$ then $h_j \equiv f^{\structure F_j} $ can be colored in $i$ hence we get that $e_j$ can be colored in all colors.
    Similarly if $e_i$ is colored in $i$ then $h_j \equiv f^{\structure F_j} $ is colored in $j$ but $e_j$ can be colored in all colors except $j$.
    Item 3 follows by \Cref{easyobs}.
    The last item is ensured by the choice of $e^{\structure F_j}, f^{\structure F_j}$, since they are not adjacent.
    So it only remains to prove that from this we can build a gadget where the distance between the edges is arbitrarily large and that, if it is large enough, it allows us to glue both edges simultaneously in an $\family F$-safe way.
    Before that, we give a graphical representation of  $\structure G_1$, in the case where the obstruction set consists of monochromatic $\structure C_5$ and $\structure K_3$ for the colors: red, blue, pink and orange.
    \begin{center}
        \includegraphics[scale=0.7]{./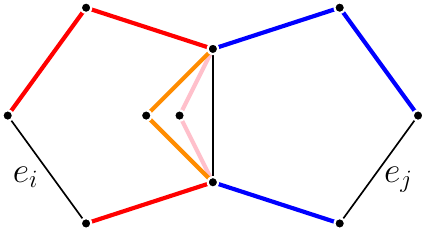}
    \end{center}
    Define now $\structure G_d$ by gluing $d$ copies of $(\structure G_1, \xi^{\structure G_1})$ in sequence, in a way that the edge $e_j$ of a copy is glued to the edge $e_i$ of the next copy.
    In addition, we attach $\structure H$, from Lemma~\ref{lemma:step1eq}, to all uncolored edges in the graph except for the first $e_i$ and the last $e_j$ (see~\Cref{fig:Gd}).
    To prove~\Cref{itm:notij}, suppose that $\gamma$ is an extension of $\xi^{\structure G_d}$ with $\gamma(e_i)=i$ and $\gamma(e_j)=j$. Then there must be some copy of $\structure G_1$ in the sequence whose edges $e_i,e_j$ are also colored similarly, and therefore $\gamma$ is not $\family F$-free. 
    \Cref{itm:allnotij} is clear, as the middle edges of the sequence can be colored with $i$ or $j$ depending on whether $p\neq i$ or $q\neq j$.
    The fact that $(\structure G_d, \xi^{\structure G_d}, e_i)(\structure G_d, \xi^{\structure G_d}, e_j)$ are $\family F$-safe follows from~\Cref{easyobs}, and the distance between $e_i$ and $e_j$ is clearly at least $d$.
    To prove~\Cref{item:last}, consider $(\structure T ,\chi^\structure T) \in \forb{\family F}$ with two edges $e,f\in E^\structure T$ such that there exists an extension $\gamma$ of $\xi^{\structure G_d}$, and $\gamma (e_i) = \chi^\structure T(e), \gamma(e_j) = \chi^\structure T(f)$.
    We take a copy of $(\structure T ,\chi^\structure T)$ and a copy of $(\structure G_d, \gamma)$ and identify $e$ with $e_i$ and $f$ with $e_j$, calling the result $(\structure T \oplus \structure G_d, \chi^\structure T \oplus \gamma)$. Suppose that there is a homomorphism from $(\structure F,\chi^{\structure F})\in\family F$ to $(\structure H \oplus \structure G_d,\chi^{\structure T} \oplus \gamma)$.
    Since $d$ is larger than the size of the obstructions, the intersection of the image of $h$ with the sequence of $\structure G_1$ cannot induce a connected graph; thus, $h$ is a homomorphism to the amalgam $\structure G_d\oplus\structure T\oplus \structure G_d$ obtained from a copy of $\structure T$ and two copies of $(G_d)^1, (G_d)^2$ identifying $e_i^1$ with $e$ and $e_j^2$ with $f$.
    By the safeness of $e_i^1$ and $e_j^2$, we get $(\structure G_d\oplus\structure T\oplus \structure G_d , \gamma \oplus \chi^{\structure T} \oplus \gamma) \in \forb{\family F}$ as desired.
    It only remains to explain how to get the gadget for $d=1$ in the case $\structure F_i= \structure K_3 = \structure F_j$.
    It suffices to glue them as shown in the following picture:
    \begin{center}
        \includegraphics[scale=0.6]{./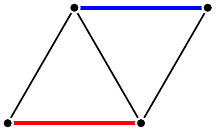}
    \end{center}
    and attach on the middle edge the gadget $(\structure H, \xi^{\structure H}, f^\structure H)$ given by Lemma~\ref{lemma:step1eq}.
    Here, $e_i, e_j$ are the leftmost and rightmost edges, respectively.
\end{proof}

\begin{figure}[h]
    \centering
    \includegraphics[width=0.5\textwidth]{./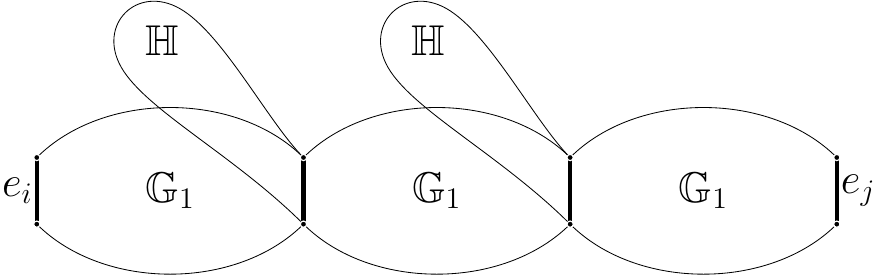}
    \caption{Construction of $\structure G_3$ from $\mathbb G_1$ from~\Cref{lemma:Gd}.}    
    \label{fig:Gd}
\end{figure}

Finally, we have all the ingredients to build a color-equality gadget.

\begin{proposition}\label{lemma:eqund}
There exists a color-equality gadget for $\family F$.
\end{proposition}
\begin{proof}
    For every distinct $i,j\in [c]$, introduce a copy of $(\structure G_d^{ij}, \xi^{\structure G_d^{ij}}, e_i, e_j)$ from \Cref{lemma:Gd} and identify first their $e_i$-edges and then identify their $e_j$-edges.
    The fact that the resulting graph $(\structure H^=,\xi^{=},e^=,f^=)$ is $\family F$-safe can be proved by applying \Cref{easyobs} finitely many times, since $(\structure H^=,\xi^{=})$ is the result of amalgamating graphs on pairs of edges along which they are safe. 
    Every $\family F$-free extension $\xi$ of $\xi^=$ must assign $e^=$ and $f^=$ to the same color, as $(\xi(e^=),\xi(f^=))$ must be different from $(i,j)$ for every $i\neq j$.
\end{proof}

Let $\mathbb A_\mathcal F$ be the relational structure with domain $A = [c]$, the constant $\{i\}$ for each $i\in [c]$, and a $|E^\structure F|$-ary relation $R_{\structure F}$, for each odd cycle or a clique $\mathbb F$ with an edge-coloring in $\mathcal F$, containing exactly those $\bar a\in A^{E^\mathbb F}$ where $(\mathbb F, \bar a)$ is $\mathcal F$-free.

\begin{proposition}\label{lemma:prec-csp}
    For every $\mathcal F$ which has a color-equality gadget, there is a poly-time reduction from $\CSP(\structure A_{\family{F}})$ to $\Ext(\family{F})$.
\end{proposition}
\begin{proof} 
    Let $\structure X$ be an input for $\CSP(\structure A_{\family{F}})$.
    We define a partially colored graph $\structure H_{\structure X}$ that is a yes-instance for $\Col(\family{F})$ if and only if $\structure X$ is a yes-instance for $\CSP(\structure A_{\family{F}})$.
    We define $\structure H_{\structure X}$ in steps. 
    First, define $\structure T_{\structure X}$ as the disjoint union of copies of $\structure F$, one for each tuple in $(R_{\structure F})^{\structure X}$, for all relations in the signature of $\structure A_{\family{F}}$.
    Recall that each element in an instance of $R_{\structure F}$ corresponds to a specific edge of $\structure F$. 
    In the second step we add a copy of the color-equality gadget, for each edge in $\structure T_{\structure X}$ that corresponds to the same vertex in $\structure X$.
    We glue $e^=$ to one edge and $f^=$ to the other. 
    In the third step we color with $i$ all the edges that correspond to the element $x$ if $x$ is assigned to $i$. We call the result $\structure H_{\structure X}$. Suppose now that it is $\family{F}$-free colorable. 
    Then we define a homomorphism from $\structure X$ to $\structure A_{\family{F}}$ as $x \mapsto i$ if one edge corresponding to $x$ is colored in $i$.
    Notice that by the properties of the gadget that we built this is well-defined.
    Moreover, it is immediate that this map is a homomorphism for how we defined $\structure A_{\family{F}}$.
    Suppose now that $\structure X$ has a homomorphism to $\structure A_{\family{F}}$ then define a coloring on $\structure H_{\structure X}$ such that an edge is colored in $i$ if and only if the corresponding element is assigned to $i$ by the homomorphism. 
    It is immediately seen that any copy of obstructions in $\structure H_{\structure X}$ has a valid coloring according to  the definition of $\structure A_{\family F}$.
    This concludes the argument.   
\end{proof}

We finally turn to the NP-completeness of $\CSP(\structure A_{\family F})$.
A \emph{polymorphism} of a relational structure $\structure{A}$ is a function $A^k\to A$ for some $n\geq 1$ such that for every relation $R\subseteq A^\ell$ of $\structure A$ and all tuples $a^1,\dots,a^k\in A$, the tuple $(f(a^1_1,\dots,a^k_1),\dots,f(a^1_\ell,\dots,a^k_\ell))$ is in $R$.
A function $f\colon A^k \rightarrow A$, for $k \geq 2$, is called \emph{cyclic} if it satisfies
$\forall x_1, \ldots, x_k\in A: f\left(x_1, \ldots, x_k\right)=f\left(x_2, \ldots, x_k, x_1\right)$.
It is known~\cite[Theorem 4.1]{Absorption} that if $\structure A$ is finite and has no cyclic polymorphism, then $\CSP(\structure A)$ is NP-complete.

\begin{proposition}\label{prop:cspNP}
    $\CSP(\structure A_{\family{F}})$ is NP-complete.
\end{proposition}
\begin{proof}
    We prove by contradiction that $\structure A_{\family{F}}$ does not have any cyclic polymorphism of any arity.
    This implies our claim by~\cite[Theorem 4.1]{Absorption}.
    Assume that $\structure A_{\family F}$ has a cyclic polymorphism of arity $k\geq 2$.
    Take $m$ to be $k/2$ if $k$ is even and $(k-1)/2$ if $k$ is odd.
    Take two different colors $i,j$ and let $p$ be $f(i,\ldots,i,j,\ldots,j)$, where the number of $i$ is $m$.
    Note that $f(p,\dots,p)=p$, since the singleton unary relation $\{p\}$ is a relation of $\structure A_{\family F}$.
    The coordinate of $R_{\structure F_p}$ are in one-to-one correspondence with $E^{\structure F_p}$, without loss of generality we assume that the first three coordinate do not correspond to a $\structure K_3$ under this correspondence. 
    Then $\forall i\neq p, j\neq p, q\neq p$ we have $(i,j,p,\ldots,p), (i,j,q,p,\ldots,p) \in R_{\structure F_p}$ and $(p,\dots,p)\not\in R_{\structure F_p}$. 
    We let $\ell\geq 3$ be the arity of $R_{\structure F_p}$.
    Consider the following $(\ell\times k)$-matrix on the left if $k$ is even, and on the right if $k$ is odd.
    \[
    \begin{tikzpicture}[baseline=(m-4-1.base)]
    \matrix (m) [matrix of math nodes,left delimiter={(},right delimiter={)}] {
    i & i & \dots & i & j & j& \dots & j\\
    j & j& \dots & j & i & i& \dots & i\\
    p & p &\dots  & p & p& p & \dots &p\\
    p & p &\dots  & p & p& p & \dots &p\\
    \vdots & \vdots & & \vdots & \vdots & \vdots & & \vdots\\
    p & p &\dots  & p & p& p&\dots &p \\
    };

    \draw[<->] (m-1-1.north west) -- node[above]{$m$} (m-1-4.north east);
    \draw[<->] (m-1-5.north west) -- node[above]{$m$} (m-1-8.north east);
    \end{tikzpicture}
    \quad\text{or}\quad
    \begin{tikzpicture}[baseline=(m-4-1.base)]
    \matrix (m) [matrix of math nodes,left delimiter={(},right delimiter={)}] {
    i & i & \dots & i & j & j& \dots & j\\
    j & j& \dots & j & j & i& \dots & i\\
    j & i & \dots & i & i & j & \dots & j\\
    p & p &\dots  & p & p& p & \dots &p\\
    \vdots & \vdots & & \vdots & \vdots & \vdots & & \vdots\\
    p & p &\dots  & p & p& p&\dots &p \\
    };

    \draw[<->] (m-1-1.north west) -- node[above]{$m$} (m-1-4.north east);
    \draw[<->] (m-1-6.north west) -- node[above]{$m$} (m-1-8.north east);
    \end{tikzpicture}
    \]
    Applying $f$ to these rows gives the constant $\ell$-tuple $(p,\dots,p)\not\in R_{\structure F_M}$, a contradiction to the fact that $f$ is a polymorphism of $\structure A_{\family F}$ since all the columns in the matrix are in $R_{\structure F_p}$.
\end{proof}

Consequently applying \Cref{lemma:precolored,lemma:prec-csp,prop:cspNP} we prove \Cref{thm:main}.

\section{Inexpressibility results for forbidden vertex colorings}\label{sec:inexpressibility}

Let $\family F$ be a family of \emph{vertex-}colored graphs.
$\Col(\family F)$ denotes as above the problem of deciding whether a given input graph admits a vertex-coloring that is $\family F$-free.
Without loss of generality, we can assume that all the colored graphs in $\family F$ have domain of the form $\{1,\dots,k\}$.
Define $\structure H_{\family F}$ to be the structure with domain the set of colors $[c]$, and with a relation $R_{\structure F}$ of arity $k$ for every graph that appears in $\structure F$ under some coloring.
The relation $R_{\structure F}$ contains all the tuples $(n_1,\dots,n_k)$ such that the coloring $\chi$ of $\structure F$ defined by coloring $i$ with the color $n_i$ is $\family F$-free.

There is a standard reduction from $\Col(\family F)$ to $\CSP(\structure H_{\family F})$ which, given an input $\structure G$, returns a new structure $\structure H$ whose domain is the set of vertices of $\structure G$, and such that a tuple $(v_1,\dots,v_k)$ is in a relation $R_{\structure F}$ of $\structure H$ if and only if the map $i\mapsto v_i$ is a homomorphism $\structure F\to \structure G$.

In an arbitrary relational structure $\structure H$, a \emph{cycle} of length $k$ is a sequence of tuples $\bar t_1,\dots,\bar t_k$, where each $\bar t_i$ is an element of a relation $ R_i$ of $\structure H$ of arity $r_i$, and such that the total number of elements appearing in those tuples is at most $\sum_{i=1}^k (r_i-1)$.
Note that two tuples of arity $\geq 2$ and having $\geq 2$ elements in common form a cycle of length $2$. 
A \emph{tree} is a structure without cycles.
The \emph{girth} of $\structure H$ is the smallest length of a cycle in $\structure H$.
The \emph{sparse incomparability lemma}~\cite[Theorem 5]{FederVardi} states that for all $g\geq 1$, if $\structure H,\structure K$ are finite structures such that $\structure H\not\to \structure K$, then there is a structure $\structure H'$ with girth at least $g$, such that $\structure H'\to \structure H$ but $\structure H'\not\to \structure K$.

\separation*
\begin{proof}
    Suppose for contradiction that such a family $\family F$ exists, and let $K$ be the maximal size of the colored graphs in $\family F$.
    By the second item in the statement, there exists a graph $\structure G$ that does not have an $\family E$-free edge coloring (equivalently, an $\family F$-free vertex coloring) and such that every subgraph of $\structure G$ of size at most $K$ has an $\family E$-free coloring.

    Let $\structure H$ be the structure obtained by the reduction from $\Col(\family F)$ to $\CSP(\structure H_{\family F})$ described above applied to $\structure G$.
    This $\structure H$ does not have a homomorphism to $\structure H_{\family F}$.
    By the sparse incomparability lemma, there exists $\structure H'$ of girth $g>\max\{|E| \mid \structure E\in\family E\}$ with a homomorphism to $\structure H$ and no homomorphism to $\structure H_{\family F}$.

    Let $\structure G'$ be the following graph: its vertices are the elements of $\structure H'$, and for every $(v_1,\dots,v_k)$ in a relation $R_{\structure F}$ in $\structure H'$, we add to $\{v_1,\dots,v_k\}$ the edges needed for the map $F\to\{v_1,\dots,v_k\}$ that sends $i$ to $v_i$ to be a homomorphism from $\structure F$ to $\structure G'$.
    Then note that in the reduction from $\Col(\family F)$ to $\CSP(\structure H_{\family F})$, $\structure G'$ is mapped to a structure containing every tuple of $\structure H'$ and therefore is a no-instance of $\CSP(\structure H_{\family F})$, from which it follows that $\structure G'$ is a no-instance of $\Col(\family F)$.
    The homomorphism $h\colon \structure H'\to \structure H$ is also a homomorphism $\structure G'\to \structure G$.
    Indeed, an edge $(v_i,v_j)$ in $\structure G'$ arises from a tuple $(v_1,\dots,v_k)\in R_{\structure F}$ in $\structure H'$, i.e., from a homomorphism $g'\colon \structure F\to \structure G'$ such that $(i,j)$ is an edge in $\structure F$ and $g'(i)=v_i,g'(j)=v_j$.
    Thus, $\bigl(h(v_1),\dots,h(v_k)\bigr)\in R_{\structure F}$ in $\structure H$ and therefore there is a homomorphism $g\colon \structure F\to \structure G$ with $g(i)=h(v_i)$ and $g(j)=h(v_j)$.
    It follows that $h(v_i),h(v_j)$ are connected by an edge in $\structure G$.

    We claim that every subgraph of $\structure G'$ induced by vertices $\{v_1,\dots,v_k\}$ where $(v_1,\dots,v_k)$ is in a relation $R_{\structure F}$ in $\structure H'$ admits an $\family E$-free edge coloring.
    Indeed, first notice that $k=|F|\leq K$.
    If the graph induced by $\{v_1,\dots,v_k\}$ does not admit an $\family E$-free coloring, then its image under $h$ is a subgraph of $\structure G$ of size at most $K$ that does not admit an $\family E$-free coloring, a contradiction.

    Every edge of $\structure G'$ is contained in a unique such maximal subgraph $\{v_1,\dots,v_k\}$ arising from a tuple in $\structure H'$. 
    Define a coloring $\chi'$ of all the edges of $\structure G'$ by gluing together all the colorings obtained in the previous paragraph.
    We claim that this coloring is $\family E$-free, which gives us the desired contradiction.
    Suppose that there is $(\structure E,\theta)\in\family E$ that admits a homomorphism $f$ to $(\structure G',\chi')$.
    Since $\structure E$ has size at most $g$, the image of $f$ in $\structure H'$ induces a tree structure.
    Since the class of $\family E$-free colored graphs is closed under vertex amalgamations by assumption, it follows that the image of $f$ must be included in a single edge of $\structure H'$.
    However, this corresponds to a subgraph of $\structure G'$ that we proved above to be $\family E$-free, and this concludes the proof.
\end{proof}

\section{Conclusion}\label{section:conclusion}

Our hardness proof depends on the existence of color-determiners for the corresponding families of colored graphs.
This is currently an important obstacle to generalizing our results; indeed, we know that some families of graphs do not admit color-determiners. We give an example in Appendix~\ref{section:picNP}, where we also show by ad hoc methods that the coloring problem $\Col(\family F)$ is NP-complete.
The second step of the proof using color-equality gadgets could in principle still apply, using the relaxed notion of \emph{smooth approximation} from~\cite{SmoothApproximations}; namely, we do not know of any example where $\Ext(\family F)$ is not poly-time equivalent to the finite-domain CSP described above.

\bibliographystyle{plainurl}
\bibliography{references}

\appendix

\section{Color-determiners do not always exist}\label{section:picNP}
Let $\sigma\colon \{1,2\}\to\{1,2\}$ be the transposition of 1 and 2.
Let $\mathrm{sw}_\sigma$ be a function that takes as input a pair: a 2-edge-colored graph $(\mathbb G, \chi)$ and a vertex $v\in G$, and returns a 2-edge-colored graph $\mathrm{sw}_\sigma(\mathbb G, \chi, v)$ which is obtained from $\mathbb G$ by switching the colors of the edges incident to $v$ to the opposite ones.
Say that a family $\mathcal F$ of 2-edge-colored graphs is \emph{closed} under $\mathrm{sw}_\sigma$ if $\mathrm{sw}_\sigma(\mathbb F, \phi, v)\in \mathcal F$ for every $(\mathbb F, \phi)\in \mathcal F$ and every $v\in F$.
Note that $\mathcal F$ being closed under $\mathrm{sw}_\sigma$ implies that color-determiners associated with $\Col(\mathcal F)$ do not exist: as partially colored edges must share no common vertices with the edge whose color is determined, applying $\mathrm{sw}_\sigma$ to any of the two vertices incident to this edge preserves the partial coloring of the determiner.
\begin{figure}[ht]
    \centering
    \includegraphics[width=0.35\linewidth]{./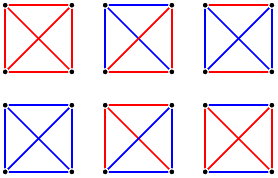}
    \caption{An example of a  family which is closed under $\mathrm{sw}_\sigma$.}
    \label{fig:counterexample}
\end{figure}
Let $\mathcal F$ be the family of 2-edge-colored copies of $\mathbb K_4$ as in Figure~\ref{fig:counterexample}.
Clearly, $\mathcal F$ is closed under $\mathrm{sw}_\sigma$ and thus has no color-determiners.
On the bright side, the non-existence of color-determiners does not imply the non-existence of a P vs.\ NP-complete complexity dichotomy.
Transformations of the form $\mathrm{sw}_\sigma$ already appear while studying universal $\omega$-categorical structures defined by forbidden tournaments under the name of \emph{switches}~\cite{Thomas,Kompatscher,Santiago}.
This, however, was not an obstacle in proving a dichotomy for such problems, although it added much technical work.
We believe that it should not be an obstacle for the edge-coloring problems either.
For the family $\mathcal F$ from \Cref{fig:counterexample}, the problem $\Col(\mathcal F)$ is NP-complete; the rest of this section is dedicated to proving this.
We denote by $\family{F}_{\mathrm{sw}}$ the family in \Cref{fig:counterexample}. 
We prove that $\Col(\family{F})$ is NP-complete. 
Before doing that, we need to introduce a few concepts.
Given a graph $\structure G$, we define $\structure G*v_{\structure G}$ as follows. 
$\structure G*v_{\structure G}$ is obtained from $\structure G$ by adding a new vertex and connecting it through an edge to all other vertices.
Notice that $\structure G$ is a subgraph of $\structure G*v_{\structure G}$.
Moreover, given a coloring $\chi$ of $\structure G$ we define a coloring of the edges of $\structure G*v_{\structure G}$, $\chi *_i v_{\structure G}$ as $\chi$ on the subgraph $\structure G$ and as $i$ on the edges incident to $v_\mathbb G$.
We now introduce an auxiliary family of colored graphs that is central in the complexity characterization of $\family{F}_{\mathrm{sw}}$.
\begin{definition}
    For a fixed family $\family F$ and a color $i\in [2]$ we define the family $\family F^{*i}$ as follows. 
    Add an edge-colored graph $(\structure F, \alpha)$ to $\family F^{*i}$ if either $(\structure F* v_{\structure F}, \alpha *_i v_{\structure F}) \in \family F$ or $(\structure F, \alpha) \in \family F$.
\end{definition}
We now prove a very useful proposition.
\begin{proposition}\label{proposition:F*-F}
    $(\structure G,\chi) \in \forb{\family F_{\mathrm{sw}}^{*i}}$ if and only if $(\structure G*v_{\structure G},\chi*_i v_{\structure G}) \in \forb{\family F_{\mathrm{sw}}}$.
\end{proposition}
\begin{proof}
    First, the direction from left to right follows from the definition of $\family F_{\mathrm{sw}}^{*i}$.
    Secondly, we prove the direction from right to left by contraposition. So we assume that $\exists (\structure F, \phi) \in \family F_{\mathrm{sw}}^{*i}$ and a homomorphism $h: (\structure F, \phi) \rightarrow (\structure G,\chi)$.
    If $(\structure F, \phi) \in \mathcal F$, then we are done.
    Otherwise, $(\structure F* v_{\structure F}, \phi*_i v_{\structure F}) \in \mathcal F$ and there exists a homomorphism $h':(\structure F* v_{\structure F}, \phi*_i v_{\structure F}) \rightarrow (\structure G*v_{\structure G},\chi*_i v_{\structure G})$ which is defined as $h$ on $\structure F$ and as $v_{\structure F} \mapsto v_{\structure G}$.
\end{proof}
We now prove the existence of a polynomial-time reduction from $\Col(\family F_{\mathrm{sw}}^{*i})$ to $\Col (\family F_{\mathrm{sw}})$.
\begin{lemma}\label{lemma:*ileq}
    $\Col(\family F_{\mathrm{sw}}^{*i}) \leq_p \Col (\family F_{\mathrm{sw}})$.
\end{lemma}
\begin{proof}
    Take $\structure G$ as an input for $\Col(\family F_{\mathrm{sw}}^{*i})$. 
    We want to show that $\structure G$ is a yes-instance for $\Col(\family F_{\mathrm{sw}}^{*i})$ if and only if $\structure G*v_{\structure G}$ is a yes-instance for $\Col(\family F_{\mathrm{sw}})$.
    This immediately implies the result, since $\structure G*v_{\structure G}$ can be computed from $\structure G$ in poly-time.
    Suppose that $\structure G$ is a yes-instance witnessed by $\chi$. 
    By Proposition \ref{proposition:F*-F} we immediately see that $\structure G*v_{\structure G}$ is a yes-instance witnessed by $\chi*_iv_{\structure G}$.
    Now we prove the other direction. Suppose that $\structure G*v_{\structure G}$ is a yes-instance witnessed by $\xi$. 
    We define a new coloring $\xi'$ in $\structure G*v_{\structure G}$ by applying switches to specific vertices.
    The obtained colored graph is still in $\forb{\family F_{\mathrm{sw}}}$.
    To be specific, we define $\xi':= \mathrm{sw}_{\sigma}(\ldots(\mathrm{sw}_{\sigma} (\structure G*v_{\structure G}, \xi, u_1)\ldots,u_n)$ where $\{u_1,\ldots,u_n\}$ is the set of vertices of $\structure G*v_{\structure G}$ such that $\forall l\in \{1,\ldots,n\} \quad \xi(vu_l)\neq i$.
    Now it is clear that $\xi' = \chi *_i v_{\structure G}$, where $\chi$ is the coloring of $\structure G$ that satisfies $\chi = \mathrm{sw}_{\sigma}(\ldots(\mathrm{sw}_{\sigma} (\structure G, \xi_{|\structure G}, u_1)\ldots,u_n)$. 
    In the previous expression $\xi_{|\structure G}$ indicates the coloring $\xi$ restricted to the subgraph $\structure G$.
    Again, by Proposition \ref{proposition:F*-F} we get $(\structure G, \chi) \in \forb{\family F_{\mathrm{sw}}}$, and hence $\structure G$ is a yes-instance for $\Col(\family{F}^{*i}_{\mathrm{sw}})$ as desired.
\end{proof}
To conclude we show that $\Col(\family F^{*i}_{\mathrm{sw}})$ is NP-complete.
For this purpose, we show that it is equivalent to a problem whose family of obstructions falls within the scope of \Cref{corollary:main-cons}.
\begin{proposition}
    Let $\family F$ be the family of colored graphs obtained from $\family F^{*i}_{\mathrm{sw}}$ by removing $(\structure K_4, 1^{\structure K_4})$ and $(\structure K_4, 2^{\structure K_4})$, recall that $i^{\structure K_4}$ indicates the constant coloring whose image is $\{i\}$.
    Then the following holds:
    \begin{itemize}
        \item $\Col(\family F^{*i}_{\mathrm{sw}})$ is equivalent to $\Col(\family F)$, and
        \item $\family F$ belongs to the scope of~\Cref{thm:main}.
    \end{itemize}
\end{proposition}
\begin{proof}
    The first item follows immediately since every $(\structure G ,\chi^{\structure G})$ that is $\family F$-free $(\structure K_4, 1^{\structure K_4}) \not\to (\structure G ,\chi^{\structure G})$ and $(\structure K_4, 2^{\structure K_4}) \not\to (\structure G ,\chi^{\structure G})$.
    For the second: note that the $\rightarrow$-maximum of the monochromatic part is $\structure K_3$ and the minimum of the non-monochromatic part is $\structure K_4$.
\end{proof}

\end{document}